\newtheorem{theorem}{Theorem}
\newtheorem{proposition}[theorem]{Proposition}
\newenvironment{proof}[1][Proof]{\noindent\textbf{#1.} }{\ \rule{0.5em}{0.5em}}
\def\C{\mathbb{ C}}
\def\R{\mathbb{ R}}
\def\HC{\mathbb{H}(\mathbb{C})}
\def\div{\mathrm{div}}
\def\grad{\mathrm{grad}}
\def\rot{\mathrm{rot}}
\begin{document}
\title{\textbf{On a three-dimensional Riccati differential equation and its symmetries}}
\author{Charles Papillon and S\'{e}bastien Tremblay \medskip\\ D\'epartement de math\'ematiques et d'informatique \\ Universit\'e du Qu\'ebec, Trois-Rivi\`{e}res, Qu\'{e}bec, G9A 5H7, Canada}
\maketitle

\begin{abstract}
\noindent A three-dimensional Riccati differential equation of complex quaternion-valued functions is studied. Many properties similar to those of the ordinary differential Riccati equation such that linearization and Picard theorem are obtained. Lie point symmetries of the quaternionic Riccati equation are calculated as well as the form of the associated three-dimensional potential of the Schr\"odinger equation. Using symmetry reductions and relations between the three-dimensional Riccati and the Schr\"odinger equation, examples are given to obtain solutions of both equations.

\vspace*{2mm}
\noindent \textbf{Keywords}: Spatial Riccati equation, Complex quaternions (biquaternions), Lie point symmetries, Schr\"odinger operator

\noindent PACS 2010: 02.30.Jr, 02.10.Hh, 02.20.Sv

\end{abstract}

\section{Introduction}
The Riccati nonlinear ordinary differential equation \cite{Davis, Ince} is given by
\begin{equation}
\label{riccati1}
y'(x)= p_0(x) + p_1(x) y(x)+ p_2(x)y^2(x), \qquad p_2(x)\neq 0,
\end{equation}
where the prime represents the derivative, plays a very important role in mathematical physics.  This equation is the simplest nonlinear differential equation which being linearizable can be completely solved. Indeed, a classical result for this equation comes from the fact that this equation can be reduced to a second-order linear ordinary differential equation \cite{Ince}
\begin{equation}
\label{linRiccati}
u''-\left(p_1+\frac{p_2'}{p_2}\right)u'+p_0p_2u=0,
\end{equation}
where the solution of this equation will lead to a solution $y=-u'/(p_2u)$ of the Riccati equation (\ref{riccati1}). Conversely by using the Cole-Hopf transformation $y=u'/u$,  for which a generalization is used in this article, the equation
$$
u''+P(x)u'+Q(x)u=0
$$
is transformed into the Riccati equation $y'= -Q(x)-P(x) y-y^2$. The theory of the Riccati equation is therefore equivalent to the theory of the homogeneous linear equation of the second-order. An interesting particular case is the homogeneous one-dimensional Schr\"odinger equation
\begin{equation}
\label{schrodinger1}
-u''+q(x)u=0,
\end{equation}
related to the Riccati equation
\begin{equation}
\label{riccati1p}
y'+y^2=q(x)
\end{equation}
using the Cole-Hopf transformation.

Moreover, the Schr\"odinger operator in \eqref{schrodinger1} can be factorized in the form
\begin{equation}
\label{fact1}
-\frac{d^2}{dx^2}+q(x)=-\left(\frac{d}{dx}+y(x)\right)\left(\frac{d}{dx}-y(x)\right)
\end{equation}
if and only if (\ref{riccati1p}) holds. This observation is the key for the original Darboux transformation \cite{Darboux, Matveev} for the one-dimensional Schr\"odinger equation. The Riccati equation (\ref{riccati1p}) is the equation that we consider for generalization using complex quaternions in this paper.

The general solution of the Riccati equation can be obtained in different ways, depending on the number of particular solutions which are known. From the first Euler theorem's we know that given one particular solution $y_1$ of the Riccati equation (\ref{riccati1}) the general solution $y$ may be obtained using the transformation $u=1/(y-y_1)$. This yields equation (\ref{riccati1}) to the linear first-order nonhomogeneous equation
\begin{equation}
u'+(2y_1p_2+p_1)u+p_2=0
\end{equation}
and a set of solutions to the Riccati equation is then given by $y=y_1+1/u$. Thus given a particular solution of the Riccati equation, it can be linearized and the general solution can be found in two integrations. From the second Euler theorem's we know that giving two particular solutions $y_1,y_2$ of the Riccati equation (\ref{riccati1}) the general solution can be found in one integration, i.e.
\begin{equation}
y=\frac{cy_2 \exp\big[\int p_2(y_1-y_2)dx\big]-y_1}{c\exp\big[\int p_2(y_1-y_2)dx\big]-1},
\end{equation}
where $c$ is a constant. Finally, as was shown by S. Lie \cite{Lie}, the Riccati equation is the only ordinary nonlinear differential equation of first-order which possesses a (nonlinear) superposition formula. For any three solutions $y_1,y_2,y_3$ of the Riccati equation (\ref{riccati1}), the general solution $y$ can be obtained without any integration as
\begin{equation}
y = \frac{y_1(y_3 - y_2) + ky_2(y_1 - y_3) }{y_3 - y_2 + k(y_1 - y_3)},
\end{equation}
where $k$ is a constant. Now from Picard's theorem \cite{Watson}, given a fourth particular solution $y_4$, we have
\begin{equation}
\label{crossratio}
\displaystyle \frac{(y_1-y_2)(y_3-y_4)}{(y_1-y_4)(y_3-y_2)}=k.
\end{equation}
Differentiating \eqref{crossratio} and dividing the result by $(y_1-y_4)(y_3-y_2)(y_1-y_2)(y_3-y_4)$, see \cite{riccati2d}, the Picard's theorem is equivalent to 
\begin{equation}
\label{picardequi}
\frac{(y_1-y_2)'}{y_1-y_2}+\frac{(y_3-y_4)'}{y_3-y_4}-\frac{(y_1-y_4)'}{y_1-y_4}-\frac{(y_3-y_2)'}{y_3-y_2}=0.
\end{equation}

Some generalizations of the Riccati equation have been considered in the literature. In \cite{riccati2d} authors have studied a complex Riccati equation related to the stationnary Schr\"odinger equation. This generalized Riccati equation possesses many interesting properties similar to its one-dimensional prototype and appears for instance in the two-dimensional SUSY quantum mechanics \cite{BilodeauTremblay}. Moreover, up to change of signs, the spatial Riccati equation that we consider in this paper was studied in \cite{Bernstein2, kravchenko, riccati3d} where many interesting results was obtained. In this paper additional results similar to the properties of the (classical) one-dimensional Riccati equation are found: among others generalizations of the first Euler theorem's as well as the Picard's theorem are obtained. Moreover, the Lie point symmetries of the generalized equation are found and used to obtain solutions of the Riccati and the Schr\"odinger equation in the space.

For the reasons mentioned above, Riccati equation (\ref{riccati1}) appears naturally in quantum physics, but this equation also appears in statistical thermodynamics as well as in cosmology \cite{Henkel,Schuch, Wald}. In the fields of applied mathematics, this equation appears in many instances when we can find exact solutions of nonlinear partial differential equations.
Indeed, it plays a very important role in the solution of integrable nonlinear partial differential equations. These equations are characterized by being the compatibility conditions between two linear partial differential equations (the Lax pair) for an auxiliary function, the so called wave function \cite{Ablowitz, Calogero}. It is also the only first-order nonlinear ordinary differential equation which possesses the Painlev\'e property \cite{Ince}, i.e. which has no movable singularity.

\section{Complex quaternions and operators}
The algebra of complex quaternions, also called biquaternions, denoted by $\HC$ consists of elements of the form
\begin{equation}
x=\sum_{\ell=0}^3 x_\ell e_\ell,
\end{equation}
for $x_\ell\in \mathbb{C}$ and the basis elements $e_\ell$ satisfying the following commutativity and multiplication rules
\begin{equation}
e_0e_\ell=e_\ell e_0, \quad e_pe_q=-\delta_{pq}+\varepsilon_{pqr}e_r,\quad ie_\ell=e_\ell i,\quad \ell=0,1,2,3,\quad p,q,r=1,2,3
\end{equation}
where $\delta_{pq}$ is the Kronecker delta,  $\varepsilon_{pqr}$ is the Levi-Civita symbol (i.e. $\varepsilon_{pqr}$ is $1$ if $(pqr)$ is an even permutation or $(123)$, $-1$ if it is an odd permutation, $0$ if any index is repeated) and $i$ is the complex imaginary element. It is useful to represent an element $x\in \HC$ as being the sum of a scalar and a vector part, i.e.
\begin{equation}
x=x_0+\mathbf x, \qquad \mathrm{where} \qquad x_0:=\mathrm{Sc}\ x \quad \mathrm{and}\quad \mathbf x:=  \mathrm{Vec}\ x=\sum_{j=1}^3 x_je_j. 
\end{equation}
The conjugate of an element $x=x_0+\mathbf x\in \HC$ is defined by $\overline x:=x_0-\mathbf x$ and the biquaternionic modulus is given by $|x|^2:=x\overline x=\sum_{\ell=0}^3 x_\ell^2\in \mathbb C$. 

The multiplication of two elements $x,y\in \HC$ can be expressed in the form
\begin{equation}
xy=(x_0+\mathbf x)(y_0+\mathbf y)=x_0y_0+x_0\mathbf y+y_0\mathbf x-\langle \mathbf x,\mathbf y\rangle + \mathbf x \times \mathbf y,
\end{equation}
where $\langle \mathbf x,\mathbf y\rangle$ and $\mathbf x \times \mathbf y$ represent the standard inner and cross products in $\C^3$, respectively.

By $M^x$ we denote the operator of multiplication by a complex quaternion $x$ from the right-hand side, i.e.
\begin{equation}
M^xy=yx, \qquad \text{for all }y\in \HC.
\end{equation}

In what follows and for the rest of the paper we suppose that $\Omega$ is a subdomain of $\R^3$. The Dirac operator $D$, also called the Moisil-Theodoresco operator, is defined as  \cite{gurlebeck}
\begin{equation}
D\varphi:=\sum_{k=1}^3 e_k\,\partial_k \varphi,
\end{equation}
where $\partial_k:= \displaystyle \frac{\partial}{\partial x_k}$ and $\varphi$ is a complex quaternionic valued-function $\varphi\in C^1(\Omega, \HC$). By a straightforward calculation, we find
\begin{equation}
D\varphi=-\div \,\boldsymbol \varphi+\grad\, \varphi_0+\rot\, \boldsymbol \varphi.
\end{equation}
A right Dirac operator $D_r$ can also be defined as $D_r\varphi:=\sum_{k=1}^3 \partial_k \varphi \,e_k$ and is represented by $D_r\varphi=-\div\, \boldsymbol \varphi+\grad \,\varphi_0-\rot \,\boldsymbol \varphi$. Defining the quaternionic conjugate operator $C_H$ as $C_H x=\overline x$ for all $x\in \HC$, the left and the right Dirac operators are related by $C_HD=-D_rC_H$. Moreover, by direct calculations we find that
\begin{equation}
D^2\varphi=-\Delta \varphi
\end{equation}
and obtain the following quaternionic Leibniz rule \cite{gurlebeck2}
\begin{equation}
D[\varphi \, \psi] = D[\varphi]\psi + \overline{\varphi} D[\psi] -2\sum_{k=1}^3 \varphi_k \,\partial_k \psi
\end{equation}
for $\varphi,\psi\in C^1(\Omega,\HC)$. In particular, we observe that if $\mathrm{Vec}\, \varphi=\mathbf 0$, i.e. $\varphi=\varphi_0$, then $D[\varphi_0 \, \psi] = D[\varphi_0]\psi + \varphi_0 D[\psi]$.

Given a known function $\boldsymbol \psi=\sum_{k=1}^3 \psi_k\, e_k \in C^1(\Omega,\mathrm{Vec}\,\HC)$, we consider the linear partial differential equation
\begin{equation}
\grad\,\varphi=\boldsymbol \psi,
\end{equation}
where $\rot \,\boldsymbol \psi=\mathbf 0$ (the compatibility condition) and the unknown function $\varphi$ is in $C^1(\Omega,\C)$. The complex-valued scalar function $\varphi$ is then said to be the potential of the vectorial function $\boldsymbol \psi$. It is well known that we can reconstruct $\varphi$, up to an arbitrary complex constant $C$, in the following way
\begin{equation}
\label{operatorA}
\mathcal A[\boldsymbol \psi](x,y,z)=\int_{x_0}^x \psi_1(\xi,y_0,z_0)d\xi+\int_{y_0}^y \psi_2(x,\eta,z_0)d\eta+\int_{z_0}^z\psi_3(x,y,\zeta)d\zeta+C,
\end{equation}
where $(x_0,y_0,z_0)$ is an arbitrary point in the domain of interest $\Omega$ \cite{Krav Tremb}.

\section{A complex quaternionic Riccati equation}
Let us consider the homogeneous Schr\"odinger equation
\begin{equation}
\label{schrodinger3}
\big(-\Delta +q(x,y,z)\big)\psi=0,
\end{equation}
where $\Delta$ is the three-dimensional laplacian, $q,\psi$ are complex valued-functions such that $q\in C(\Omega,\C)$ and $\psi\in C^2(\Omega,\C)$. Let us also consider the following nonlinear partial differential equation
\begin{equation}
\label{riccati3}
D\mathbf Q+|\mathbf Q|^2=q(x,y,z),
\end{equation}
where $\mathbf Q$ is a vectorial complex quaternionic functions in $C^1(\Omega,\mathrm{Vec}\ \HC)$ and $|\mathbf Q|^2=-\mathbf Q^2=\sum_{j=1}^3 Q_j^2$. We call equation \eqref{riccati3} the {\em three-dimensional Riccati equation} or the {\em (bi)quaternionic Riccati equation} for $\mathbf Q$ in the (bi)quaternions and $q$ a (complex)real-valued function. Decomposing this equation into scalar and vector parts, we find
\begin{equation}
\label{riccatidec}
\begin{array}{rcl}
-\div \, \mathbf Q+|\mathbf Q|^2 &=&q,  \\*[2ex]
\rot \, \mathbf Q &=&\mathbf{0}.
\end{array}
\end{equation}

In \cite{riccati3d} authors have shown a natural counterpart of the relation mentioned in the introduction, between the one-dimensional Schr\"odinger equation (\ref{schrodinger1}) and the Riccati equation (\ref{riccati1p}):

\begin{theorem}
\label{cole3}
\em{\cite{riccati3d}}
The complex-valued function $\psi$ is a solution of the Schr\"odinger equation (\ref{schrodinger3}) if and only if $\mathbf Q=-\displaystyle \frac{D\psi}{\psi}$ is a solution of the Riccati equation (\ref{riccati3}).
\end{theorem}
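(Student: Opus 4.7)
The plan is to treat this as a direct identity: I will show that for any nonvanishing scalar $\psi\in C^2(\Omega,\C)$, the expression $D\mathbf{Q}+|\mathbf{Q}|^2$ with $\mathbf{Q}=-D\psi/\psi$ equals $\Delta\psi/\psi$ identically. Both directions of the equivalence then follow immediately by setting this equal to $q$.

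First I would observe that since $\psi$ is $\C$-valued (a pure scalar in $\HC$), $D\psi=\mathrm{grad}\,\psi$ has no scalar part, so $\mathbf{Q}=-D\psi/\psi\in\mathrm{Vec}\,\HC$ as required, and the quantity $|\mathbf{Q}|^2=-\mathbf{Q}^2$ is well defined. Next I would apply the quaternionic Leibniz rule stated in the paper to $\mathbf{Q}=(-1/\psi)\cdot D\psi$. Because the left factor $-1/\psi$ lies in $\mathrm{Sc}\,\HC$, the correction term $-2\sum_k\varphi_k\partial_k\psi$ in the Leibniz rule vanishes (all vectorial components $\varphi_k$ are zero), and the rule collapses to
\begin{equation*}
D\mathbf{Q}=D[-1/\psi]\,D\psi+(-1/\psi)\,D[D\psi].
\end{equation*}
The scalar chain rule (legitimate since $\psi$ commutes with every $e_k$) gives $D[-1/\psi]=\psi^{-2}D\psi$, and the identity $D^2=-\Delta$ gives $D[D\psi]=-\Delta\psi$. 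Substituting yields
\begin{equation*}
D\mathbf{Q}=\frac{(D\psi)^2}{\psi^2}+\frac{\Delta\psi}{\psi}.
\end{equation*}

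The key simplification is then that $D\psi$ is purely vectorial, so $(D\psi)^2=-|D\psi|^2$ by the very definition of the modulus on $\mathrm{Vec}\,\HC$. Consequently $(D\psi)^2/\psi^2=-|\mathbf{Q}|^2$, and adding $|\mathbf{Q}|^2$ to both sides gives the clean identity
\begin{equation*}
D\mathbf{Q}+|\mathbf{Q}|^2=\frac{\Delta\psi}{\psi}.
\end{equation*}
From here the biconditional is immediate: $D\mathbf{Q}+|\mathbf{Q}|^2=q$ if and only if $\Delta\psi=q\psi$, i.e.\ $(-\Delta+q)\psi=0$.

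I do not expect a genuine obstacle, only two bookkeeping points that must be handled carefully. The first is ensuring the Leibniz rule is applied in the correct order with the correct conjugation, and using that the scalarity of $1/\psi$ kills the anomalous $-2\sum\varphi_k\partial_k\psi$ term; the second is remembering the sign convention $|\mathbf{x}|^2=-\mathbf{x}^2$ for vectorial quaternions, which is what makes $(D\psi)^2$ combine cleanly with $|\mathbf{Q}|^2$. Since the argument is a chain of identities rather than an implication with hypotheses, no separate reverse direction is needed.
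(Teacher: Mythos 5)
Your argument is correct: for nonvanishing scalar $\psi$ the identity $D\mathbf Q+|\mathbf Q|^2=\Delta\psi/\psi$ with $\mathbf Q=-D\psi/\psi$ does follow exactly as you say from the degenerate (scalar-factor) case of the quaternionic Leibniz rule, $D^2=-\Delta$, and $(D\psi)^2=-|D\psi|^2$ for the purely vectorial $D\psi$, and it gives both directions of the equivalence at once. The paper states this theorem without proof, citing \cite{riccati3d}, but your computation is essentially the same direct verification the paper performs for the companion Theorem~\ref{expA} (there applied to $\psi=e^{-\mathcal A[\mathbf Q]}$), so nothing needs to be added.
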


\begin{theorem}
\label{expA}
The complex quaternionic function $\mathbf Q$ is a solution of the Riccati equation (\ref{riccati3}) if and only if $\psi=\exp\left(-\mathcal A[\mathbf Q]\right)$ is a solution of the Schr\"odinger equation (\ref{schrodinger3}), where $\mathcal A[\mathbf Q]$ is the scalar function defined by (\ref{operatorA}).
\end{theorem}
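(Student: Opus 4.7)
The plan is to reduce Theorem~\ref{expA} to Theorem~\ref{cole3} by verifying that the Cole--Hopf-type substitution $\psi=\exp(-\mathcal{A}[\mathbf{Q}])$ produces exactly the function that Theorem~\ref{cole3} pairs with $\mathbf{Q}$, namely the scalar solution satisfying $\mathbf{Q}=-D\psi/\psi$.

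First I would check that the scalar potential $\varphi:=\mathcal{A}[\mathbf{Q}]$ is well-defined. The construction \eqref{operatorA} is only meaningful, up to an additive constant, when $\rot\,\mathbf{Q}=\mathbf{0}$, and this is exactly the vector part of the decomposed Riccati system \eqref{riccatidec}. In the forward direction this compatibility is therefore automatic from the hypothesis that $\mathbf{Q}$ solves \eqref{riccati3}; in the converse direction the expression $\exp(-\mathcal{A}[\mathbf{Q}])$ already presupposes it. By construction one then has $\grad\,\varphi=\mathbf{Q}$.

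Next I would compute $D\psi$ for $\psi=\exp(-\varphi)\in C^2(\Omega,\C)$. Since $\psi$ is scalar-valued, the $-\div$ and $\rot$ pieces of the Dirac operator vanish on it, and the quaternionic Leibniz rule collapses (via the observation that $D[\varphi_0\cdot\psi]=D[\varphi_0]\psi+\varphi_0 D[\psi]$ when $\varphi_0$ is scalar) to the ordinary chain rule:
\begin{equation*}
D\psi=\grad\,\psi=-\exp(-\varphi)\,\grad\,\varphi=-\psi\,\mathbf{Q}.
\end{equation*}
Because an exponential is nowhere zero, this rearranges to $\mathbf{Q}=-D\psi/\psi$. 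Applying Theorem~\ref{cole3} then delivers both implications at once: $\psi=\exp(-\mathcal{A}[\mathbf{Q}])$ solves \eqref{schrodinger3} if and only if $-D\psi/\psi=\mathbf{Q}$ solves \eqref{riccati3}.

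I do not anticipate a serious obstacle, since the statement is essentially a corollary of Theorem~\ref{cole3}. The only things to confirm in passing are that $\psi$ has enough regularity to apply Theorem~\ref{cole3} (which follows from the regularity assumed on $\mathbf{Q}$ together with the smoothing provided by integration in \eqref{operatorA}), and that the arbitrary constant of integration hidden in $\mathcal{A}[\mathbf{Q}]$ only rescales $\psi$ by a nonzero multiplicative constant and therefore does not affect whether $\psi$ satisfies the linear equation \eqref{schrodinger3}.
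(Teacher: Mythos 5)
Your proof is correct, but it takes a genuinely different route from the paper's. The paper proves Theorem~\ref{expA} by a direct second-order computation: it first obtains $De^{-\mathcal A[\mathbf Q]}=-\mathbf Q\,e^{-\mathcal A[\mathbf Q]}$ (the same first-order identity you derive), then applies $D$ again, using the quaternionic Leibniz rule together with $\overline{\mathbf Q}=-\mathbf Q$ and $\mathbf Q^2=-|\mathbf Q|^2$, to get $D^2e^{-\mathcal A[\mathbf Q]}=-e^{-\mathcal A[\mathbf Q]}\bigl(D\mathbf Q+|\mathbf Q|^2\bigr)$; since $D^2=-\Delta$ and the exponential never vanishes, $(-\Delta+q)e^{-\mathcal A[\mathbf Q]}=0$ holds if and only if $D\mathbf Q+|\mathbf Q|^2=q$. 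You instead stop at the first-order identity $D\psi=-\psi\mathbf Q$, rearrange it to $\mathbf Q=-D\psi/\psi$ (legitimate, since $\psi$ is scalar and nowhere zero), and outsource the second-order work to Theorem~\ref{cole3}, which already encodes the equivalence for the pair $(\psi,-D\psi/\psi)$. Both arguments rest on $\grad\,\mathcal A[\mathbf Q]=\mathbf Q$, hence implicitly on $\rot\,\mathbf Q=\mathbf 0$, in \emph{both} directions of the equivalence; you flag this explicitly, whereas the paper leaves it tacit. Your reduction is more economical and makes transparent that Theorem~\ref{expA} is essentially a corollary of Theorem~\ref{cole3}; the paper's computation is self-contained (modulo the Leibniz rule) and displays the identity $(-\Delta+q)e^{-\mathcal A[\mathbf Q]}=e^{-\mathcal A[\mathbf Q]}\bigl(q-D\mathbf Q-|\mathbf Q|^2\bigr)$ explicitly, which sits closer to the factorization \eqref{factorization3} used immediately afterwards.
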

\begin{proof}
For a given solution $\mathbf Q$ of  (\ref{riccati3}), we have
$$
\begin{array}{rcl}
D^2e^{-\mathcal A[\mathbf Q]} &=& -D\Big(D(\mathcal A[\mathbf Q])e^{-\mathcal A[\mathbf Q]}\Big)=-D\Big(\mathbf Q e^{-\mathcal A[\mathbf Q]}\Big)\\*[2ex]
&=& e^{-\mathcal A[\mathbf Q]}\Big(-D\mathbf Q + \mathbf Q^2\Big)=-e^{-\mathcal A[\mathbf Q]}\Big(D\mathbf Q+|\mathbf Q|^2\Big).
\end{array}
$$
Hence, since $(-\Delta+q)e^{-\mathcal A[\mathbf Q]}  = (D^2+q)e^{-\mathcal A[\mathbf Q]}$ and $q=D\mathbf Q+|\mathbf Q|^2$ we find the desired result.
\end{proof}

The last two theorems represent direct generalization of the relation between the one-dimensional Schr\"odinger equation \eqref{schrodinger1} and Riccati equation \eqref{riccati1p} using the Cole-Hopf transformation.

A factorization of the Schr\"odinger operator can be written in the following form \cite{Bernstein, BernsteinGurl}
\begin{equation}
\label{factorization3}
\begin{array}{rcl}
\big(-\Delta +q(x,y,z)\big)\psi &=& \big(D+M^{\mathbf Q}\big)\big(D-\mathbf Q\,C_{H}\big)\psi \\*[2ex]
&=& \big(D_r+\mathbf{Q} \big) \big(D_r-M^{\mathbf Q}\,C_{H} \big)\psi
\end{array}
\end{equation}
if and only if $\mathbf Q$ is a solution of the Riccati equation (\ref{riccati3}). Here the operators $C_H$, which do not act on the (scalar) complex function $\psi$, are inserted for convenience in what follows. The second factorization obtained in \eqref{factorization3} consists of the application of the operator $C_H$ on the first factorization.

This last result generalizing factorization (\ref{fact1}), combining with theorems \ref{cole3} and \ref{expA}, suggest that (\ref{riccati3}) can be considered a good generalisation of (\ref{riccati1p}).

Considering now the first factorization in \eqref{factorization3} given in terms of two linear first-order partial differential operators, we obtain the following two propositions.
\begin{proposition}
\label{prop1}
{\em \cite{Krav Tremb}}
Let $\varphi$ be a non-vanishing solution of the Schr\"odinger equation \eqref{schrodinger3}. If $W=W_0+\mathbf W\in C^1(\Omega,\HC)$ is a solution of the complex quaternionic equation
\begin{equation}
\label{vekua}
\Big(D-\frac{D\varphi}{\varphi}C_H\Big)W=0,
\end{equation}
then $W_0$ is a solution of the same Schr\"odinger equation \eqref{schrodinger3}. Moreover, the components of $W$ satisfy the equations
\begin{equation}
\label{components1}
\div\left[\varphi^2 \grad\left(\frac{W_0}{\varphi}\right)\right]=0 
\end{equation}
and
\begin{equation}
\label{components2}
\rot\left[\varphi^{-2}\rot\left(\varphi \mathbf W\right)\right]=\mathbf 0.
\end{equation}
\end{proposition}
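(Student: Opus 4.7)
The plan is to exploit Theorem \ref{cole3}, which tells us that $\mathbf{Q}:=-D\varphi/\varphi=-\grad\varphi/\varphi$ is a vectorial solution of the Riccati equation (\ref{riccati3}), so the hypothesis becomes $\bigl(D+\mathbf{Q}\,C_H\bigr)W=0$. I would then expand this equation using $W=W_0+\mathbf{W}$, $C_H W=W_0-\mathbf{W}$, the decomposition $D\varphi_0+D\boldsymbol{\varphi}=\grad\varphi_0-\div\boldsymbol{\varphi}+\rot\boldsymbol{\varphi}$, and the biquaternionic product formula $\mathbf{Q}\mathbf{W}=-\langle\mathbf{Q},\mathbf{W}\rangle+\mathbf{Q}\times\mathbf{W}$. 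Separating scalar and vector parts produces the two real equations
\begin{equation*}
-\div\mathbf{W}+\langle\mathbf{Q},\mathbf{W}\rangle=0,\qquad \grad W_0+W_0\mathbf{Q}+\rot\mathbf{W}-\mathbf{Q}\times\mathbf{W}=\mathbf{0}.
\end{equation*}

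Substituting $\mathbf{Q}=-\grad\varphi/\varphi$ and multiplying each equation by $\varphi$, I would recognize the left-hand sides as $\div(\varphi\mathbf{W})=0$ (using the Leibniz rule for divergence) and, with the identity $\rot(\varphi\mathbf{W})=\varphi\rot\mathbf{W}+\grad\varphi\times\mathbf{W}$ together with $\grad(W_0/\varphi)=(\varphi\grad W_0-W_0\grad\varphi)/\varphi^2$, the equivalent vector identity
\begin{equation*}
\varphi^2\grad\!\left(\frac{W_0}{\varphi}\right)+\rot(\varphi\mathbf{W})=\mathbf{0}.
\end{equation*}

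From this single identity, both component equations fall out immediately: taking the divergence kills the curl term and yields (\ref{components1}), while dividing by $\varphi^2$ first and then taking the curl kills the gradient term and yields (\ref{components2}). To finish, I would expand (\ref{components1}) by the product rule to get $\varphi\Delta W_0-W_0\Delta\varphi=0$ and invoke $\Delta\varphi=q\varphi$ (since $\varphi$ solves \eqref{schrodinger3}) to conclude that $(-\Delta+q)W_0=0$.

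The only non-routine step is bookkeeping: one must carefully track the non-commutativity of the biquaternionic product when passing $\mathbf{Q}$ across $C_H W$ and correctly identify the scalar/vector pieces. Once the single combined identity $\varphi^2\grad(W_0/\varphi)=-\rot(\varphi\mathbf{W})$ is in hand, all three conclusions of the proposition are essentially immediate consequences of $\div\rot=0$ and of the Schrödinger equation satisfied by $\varphi$.
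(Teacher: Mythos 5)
Your derivation is correct. For the record, the paper itself offers no proof of this proposition --- it is quoted verbatim from the cited reference [Krav Tremb] --- so there is no internal argument to compare against; your computation stands as a self-contained verification. The key steps all check out: splitting $\bigl(D+\mathbf Q\,C_H\bigr)W=0$ with $\mathbf Q=-\grad\varphi/\varphi$ into the scalar part $-\div\mathbf W+\langle\mathbf Q,\mathbf W\rangle=0$ (equivalently $\div(\varphi\mathbf W)=0$, a byproduct not needed for the stated conclusions) and the vector part, which after multiplication by $\varphi$ is exactly
\begin{equation*}
\varphi^2\grad\!\left(\frac{W_0}{\varphi}\right)+\rot(\varphi\mathbf W)=\mathbf 0 .
\end{equation*}
Applying $\div$ gives \eqref{components1}, dividing by $\varphi^2$ and applying $\rot$ gives \eqref{components2}, and expanding \eqref{components1} to $\varphi\Delta W_0-W_0\Delta\varphi=0$ together with $\Delta\varphi=q\varphi$ and $\varphi\neq 0$ gives $(-\Delta+q)W_0=0$. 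The one organizational merit of your argument worth noting is that all three conclusions are extracted from the single combined identity above via $\div\rot=0$ and $\rot\grad=\mathbf 0$, which is cleaner than treating them separately.
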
 

Using theorem~\ref{cole3}, an immediate consequence of this last proposition is that $\mathbf Q=-\frac{DW_0}{W_0}$ is solution of the Riccati equation \eqref{riccati3} for a nonvanishing complex-valued function $W_0$ in $\Omega$.

For a given purely vectorial function $\mathbf F(\mathbf x)$ defined for all $\mathbf x=(x_1,x_2,x_3)\in\Omega$, we define $\mathbf B[\mathbf F](\mathbf x)$ by
$$
\mathbf B[\mathbf F](\mathbf x):=\displaystyle \frac{1}{4\pi}\int_{\Omega}\displaystyle \frac{\mathbf F(\mathbf y)}{|\mathbf x-\mathbf y|}d\Omega.
$$
\begin{proposition}
Let $\varphi$ be a non-vanishing solution of the Schr\"odinger equation \eqref{schrodinger3}. If $\mathbf w \in C^1(\Omega,\mathrm{Vec}\,\HC)$ is a purely vectorial solution of the complex quaternionic equation
\begin{equation}
\Big(D+M^{\frac{D\varphi}{\varphi}}\Big)\mathbf w=0,
\end{equation}
then a solution of equation \eqref{vekua} is given by
\begin{equation}
\label{solW}
W=\displaystyle \frac{1}{2}\left(\varphi \mathcal A\left[\frac{\mathbf w}{\varphi}\right]-\varphi^{-1}\rot\big(\mathbf B[\varphi \mathbf w]\big)+\frac{\grad \,h}{\varphi}\right),
\end{equation}
where $h$ is an arbitrary harmonic function in $\Omega$. Moreover, a solution of the Schr\"odinger equation \eqref{schrodinger3} is given by $W_0=\displaystyle \frac{1}{2}\varphi \mathcal A\left[\frac{\mathbf w}{\varphi}\right]$ and a solution of the Riccati equation \eqref{riccati3} is given by 
\begin{equation}
\label{soluQ}
\mathbf Q=-\varphi^{-1}\Big(\grad\,\varphi+\frac{\mathbf w}{\mathcal A\left[\frac{\mathbf w}{\varphi}\right]}\Big).
\end{equation}
\end{proposition}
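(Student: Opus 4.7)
My plan is to verify the three assertions in sequence. First, I would decompose the hypothesis $\bigl(D+M^{D\varphi/\varphi}\bigr)\mathbf w=0$ into scalar and vector parts, using $D\mathbf w=-\div\,\mathbf w+\rot\,\mathbf w$ together with the biquaternionic product rule for two vectors. The scalar part yields $\div(\varphi\mathbf w)=0$, while the vector part yields $\rot\,\mathbf w+\mathbf w\times(\grad\,\varphi/\varphi)=\mathbf 0$; a direct computation shows that this is equivalent to the compatibility condition $\rot(\mathbf w/\varphi)=\mathbf 0$. Consequently the scalar potential $\mathcal A[\mathbf w/\varphi]$ defined via \eqref{operatorA} is well-defined with $\grad\,\mathcal A[\mathbf w/\varphi]=\mathbf w/\varphi$.

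Next I would write $W=W_0+\mathbf W$ and decompose the Vekua-type equation \eqref{vekua} into scalar and vector components. A short biquaternionic calculation shows that \eqref{vekua} is equivalent to the pair
\begin{equation*}
\div(\varphi\mathbf W)=0,\qquad \varphi^2\,\grad(W_0/\varphi)+\rot(\varphi\mathbf W)=\mathbf 0.
\end{equation*}
For the candidate $W_0=\tfrac{1}{2}\varphi\,\mathcal A[\mathbf w/\varphi]$ and $\varphi\mathbf W=\tfrac{1}{2}\bigl(-\rot\,\mathbf B[\varphi\mathbf w]+\grad\,h\bigr)$, the first condition reduces, via $\div\,\rot=0$ and $\div\,\grad=\Delta$, to $\Delta h=0$, which is precisely the harmonicity hypothesis on $h$. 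For the second, $\varphi^2\,\grad(W_0/\varphi)=\tfrac{1}{2}\varphi\mathbf w$, and the identity $\rot\,\rot\,\mathbf B[\varphi\mathbf w]=\varphi\mathbf w$ (obtained from $\rot\,\rot=\grad\,\div-\Delta$, the Newton potential relation $-\Delta\mathbf B[\mathbf F]=\mathbf F$, and $\div\,\mathbf B[\varphi\mathbf w]=0$) gives $\rot(\varphi\mathbf W)=-\tfrac{1}{2}\varphi\mathbf w$, so the two terms cancel.

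Once the Vekua equation for $W$ is established, Proposition \ref{prop1} immediately yields that $W_0$ is a solution of the Schr\"odinger equation \eqref{schrodinger3}. The Riccati claim \eqref{soluQ} then follows from Theorem \ref{cole3}: a direct computation using $DW_0=\tfrac{1}{2}\bigl(\grad\,\varphi\cdot\mathcal A[\mathbf w/\varphi]+\mathbf w\bigr)$ yields
\begin{equation*}
-\frac{DW_0}{W_0}=-\frac{\grad\,\varphi\cdot\mathcal A[\mathbf w/\varphi]+\mathbf w}{\varphi\,\mathcal A[\mathbf w/\varphi]}=-\varphi^{-1}\Bigl(\grad\,\varphi+\frac{\mathbf w}{\mathcal A[\mathbf w/\varphi]}\Bigr).
\end{equation*}

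The main obstacle is establishing $\div\,\mathbf B[\varphi\mathbf w]=0$, which is needed to discard the $\grad\,\div$ term in the identity for $\rot\,\rot\,\mathbf B[\varphi\mathbf w]$. This reduces to an integration by parts on the Newton kernel, exploiting $\grad_{\mathbf x}(1/|\mathbf x-\mathbf y|)=-\grad_{\mathbf y}(1/|\mathbf x-\mathbf y|)$ together with the divergence-free condition $\div(\varphi\mathbf w)=0$ from the first step; one must also justify that the resulting boundary contributions on $\partial\Omega$ vanish, which is tacitly implied by the smoothness and integrability setup. The remainder of the argument is routine vector-calculus bookkeeping.
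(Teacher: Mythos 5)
Your proposal is correct, but it proves more than the paper does: for the first assertion (that $W$ in \eqref{solW} solves \eqref{vekua}) the paper simply cites Theorem~9 of \cite{Krav Tremb}, whereas you reconstruct that result from scratch. Your route --- splitting both the hypothesis $\bigl(D+M^{D\varphi/\varphi}\bigr)\mathbf w=0$ and the Vekua equation \eqref{vekua} into scalar and vector parts, obtaining $\div(\varphi\mathbf w)=0$, $\rot(\mathbf w/\varphi)=\mathbf 0$ and the pair $\div(\varphi\mathbf W)=0$, $\varphi^2\grad(W_0/\varphi)+\rot(\varphi\mathbf W)=\mathbf 0$, and then verifying the candidate via $-\Delta\mathbf B[\mathbf F]=\mathbf F$ and $\rot\rot=\grad\div-\Delta$ --- is sound and has the merit of making the statement self-contained; it also makes visible exactly where each hypothesis (harmonicity of $h$, the equation for $\mathbf w$) is used. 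The price is the technical point you flag yourself: $\div\mathbf B[\varphi\mathbf w]=0$ requires not only $\div(\varphi\mathbf w)=0$ but also the vanishing of a boundary integral over $\partial\Omega$ (or a decay/normal-trace condition), a hypothesis the paper absorbs into the citation rather than stating. For the remaining two assertions your argument coincides with the paper's: Proposition~\ref{prop1} gives that $W_0=\tfrac12\varphi\mathcal A[\mathbf w/\varphi]$ solves \eqref{schrodinger3} (the paper's text nominally invokes Theorem~\ref{cole3} here, but Proposition~\ref{prop1} is the statement actually needed, as you use it), and the computation of $-DW_0/W_0$ using $\grad\mathcal A[\mathbf w/\varphi]=\mathbf w/\varphi$ to obtain \eqref{soluQ} is exactly the paper's final step.
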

\begin{proof}
The solution $W$ given by equation \eqref{solW} was obtained in \cite{Krav Tremb} (see theorem 9) and, using theorem~\ref{cole3}, the solution $W_0$ of the Schr\"dinger equation \eqref{schrodinger3} is the scalar part of $W$. Finally for a non-vanishing scalar function $\mathcal A\left[\frac{\mathbf w}{\varphi}\right]$, using theorem  \ref{expA}, we find
$$
\mathbf Q=-\frac{D\left(\varphi \mathcal A\left[\frac{\mathbf w}{\varphi}\right]\right)}{\varphi \mathcal A\left[\frac{\mathbf w}{\varphi}\right]}=-\frac{\mathcal A\left[\frac{\mathbf w}{\varphi}\right]\grad\,\varphi +\mathbf w}{\varphi \mathcal A\left[\frac{\mathbf w}{\varphi}\right]}
$$
which is the desired result \eqref{soluQ}.
\end{proof}

The following proposition will be usefull in the demonstration of the next theorem.
\begin{proposition}
\label{propfortheo}
{\em \cite{Krav Tremb}}
Let $W_0$ be a scalar solution of the Schr\"odinger equation \eqref{schrodinger3} with $q=\frac{\Delta \varphi}{\varphi}$ in $\Omega$. Then the vector function $\mathbf W$ satisfying \eqref{components2}, such that $W=W_0+\mathbf W$ is a solution of \eqref{vekua}, is constructed according to the formula
\begin{equation}
\mathbf W=-\varphi^{-1}\left\{\rot\left(\mathbf B\left[\varphi^2\grad\left(\frac{W_0}{\varphi}\right)\right]\right)+\grad\,h\right\},
\end{equation}
where $h$ is an arbitrary harmonic function.

Given a solution $\mathbf W$ of \eqref{components2}, the corresponding solution $W_0$ of the Schr\"odinger equation \eqref{schrodinger3} with $q=\frac{\Delta \varphi}{\varphi}$ such that $W=W_0+\mathbf W$ is a solution of \eqref{vekua} is constructed as follows
\begin{equation}
W_0=-\varphi \mathcal A\left[\varphi^{-2}\rot(\varphi \mathbf W)\right].
\end{equation}
\end{proposition}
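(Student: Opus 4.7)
My approach is to decompose equation \eqref{vekua} into its scalar and vector parts, which reduces both assertions to standard gradient/curl potential constructions. Writing $W = W_0 + \mathbf{W}$ with $C_H W = W_0 - \mathbf{W}$ and expanding $(D\varphi/\varphi)(W_0 - \mathbf{W})$ via the quaternionic product rule, equation \eqref{vekua} separates, after multiplication by $\varphi$, into the system
\begin{equation*}
\div(\varphi\mathbf{W}) = 0, \qquad \rot(\varphi\mathbf{W}) = -\varphi^{2}\,\grad(W_0/\varphi).
\end{equation*}
These two conditions drive the whole argument, and I intend to realize them in each of the two directions of the proposition.

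For the first assertion, fix a scalar solution $W_0$ of \eqref{schrodinger3} with $q = \Delta\varphi/\varphi$ and set $\mathbf{F} := \varphi^{2}\grad(W_0/\varphi) = \varphi\grad W_0 - W_0\grad\varphi$. A direct computation yields $\div\mathbf{F} = \varphi\Delta W_0 - W_0\Delta\varphi$, which vanishes by the Schr\"odinger equation. Since $\mathbf{F}$ is divergence-free, the Newton potential $\mathbf{B}[\mathbf{F}]$ satisfies $-\Delta\mathbf{B}[\mathbf{F}] = \mathbf{F}$ together with $\div\mathbf{B}[\mathbf{F}] = 0$, so $\rot\rot\mathbf{B}[\mathbf{F}] = \mathbf{F}$. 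Choosing $\varphi\mathbf{W} = -\rot\mathbf{B}[\mathbf{F}] - \grad h$ with $h$ harmonic then reproduces both equations above and gives precisely the stated formula; the harmonic gauge $h$ parametrizes the non-uniqueness of the vector potential. Condition \eqref{components2} holds automatically, since $\rot(\varphi^{-2}\rot(\varphi\mathbf{W})) = -\rot(\grad(W_0/\varphi)) = \mathbf{0}$.

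For the second assertion, assume $\mathbf{W}$ satisfies \eqref{components2}. This is exactly the integrability condition guaranteeing that the vector field $-\varphi^{-2}\rot(\varphi\mathbf{W})$ admits a scalar potential, which the operator $\mathcal{A}$ recovers explicitly: define $W_0 := -\varphi\,\mathcal{A}[\varphi^{-2}\rot(\varphi\mathbf{W})]$. By construction $\grad(W_0/\varphi) = -\varphi^{-2}\rot(\varphi\mathbf{W})$, which is the vector part of \eqref{vekua}. To verify that $W_0$ solves the Schr\"odinger equation, I apply $\div(\varphi^{2}\,\cdot\,)$ to this identity: the right-hand side becomes $-\div\rot(\varphi\mathbf{W}) = 0$, while the left-hand side equals $\varphi\Delta W_0 - W_0\Delta\varphi$ by the same algebraic identity as above, whence $\Delta W_0 = (\Delta\varphi/\varphi)W_0 = qW_0$.

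The main technical delicacy is the Newton-potential step in the first assertion: the identity $\div\mathbf{B}[\mathbf{F}] = 0$ for divergence-free $\mathbf{F}$ follows from integration by parts provided the boundary term $\int_{\partial\Omega}\langle \mathbf{F},\nu\rangle/|x-y|\,dS$ vanishes, which I would justify by appealing to the framework already developed in \cite{Krav Tremb}. I also note that the scalar condition $\div(\varphi\mathbf{W}) = 0$ is logically independent of \eqref{components2}: in the first assertion it is automatically secured by the divergence-free choice of $\varphi\mathbf{W}$, whereas in the second assertion it must be read as an implicit compatibility hypothesis on the input $\mathbf{W}$ in order for the full equation \eqref{vekua} to hold.
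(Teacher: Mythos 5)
The paper itself offers no proof of this proposition --- it is imported verbatim from \cite{Krav Tremb} --- so your argument can only be measured against the construction in that reference, and it matches it in all essentials. Your scalar/vector splitting of \eqref{vekua} into $\div(\varphi\mathbf W)=0$ and $\rot(\varphi\mathbf W)=-\varphi^{2}\grad(W_0/\varphi)$ is correct and is exactly the mechanism behind \eqref{components1} and \eqref{components2}; both reconstruction directions (a vector potential via $\rot\,\mathbf B[\,\cdot\,]$ for the solenoidal field $\varphi^{2}\grad(W_0/\varphi)$, a scalar potential via $\mathcal A$ for the irrotational field $\varphi^{-2}\rot(\varphi\mathbf W)$) are the intended ones, and your closing remark that $\div(\varphi\mathbf W)=0$ is an extra compatibility condition on $\mathbf W$ in the second direction, not a consequence of \eqref{components2}, is a genuine and correct observation that the statement leaves implicit. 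The one place where your proof is not self-contained is the identity $\rot\rot\,\mathbf B[\mathbf F]=\mathbf F$: since $\rot\rot=\grad\,\div-\Delta$ and $-\Delta\mathbf B[\mathbf F]=\mathbf F$, what is actually needed is that $\grad\,\div\,\mathbf B[\mathbf F]=\mathbf 0$, and from $\div\,\mathbf F=0$ alone one only gets that $\div\,\mathbf B[\mathbf F]$ equals (minus) a single-layer potential of $\langle\mathbf F,\nu\rangle$ over $\partial\Omega$, which is harmonic in $\Omega$ but not constant in general. You flag this and defer to \cite{Krav Tremb}; that is acceptable here precisely because the proposition is itself a quotation of that reference, but be aware that this deferred step is the only one carrying real analytic content --- everything else in your argument is algebra and is correct.
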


We are now able to consider generalizations of the first Euler's theorem as well as the Picard's theorem (or its equivalent form \eqref{picardequi}).

\begin{theorem}
\label{euler1}
Let $\mathbf Q_1$ be a bounded particular solution of the Riccati equation \eqref{riccati3}. Then the Riccati equation \eqref{riccati3} is reduced to the following first-order equation
\begin{equation}
\label{1order}
DW=-\mathbf Q_1 \overline W
\end{equation}
in the following sense. Any solution of the Riccati equation \eqref{riccati3} has the form
\begin{equation}
\label{dscw}
\mathbf Q=-\displaystyle \frac{D(\mathrm{Sc\,}W)}{\mathrm{Sc\,}W}
\end{equation}
and vice-versa; any solution of equation \eqref{1order} can be expressed via a corresponding solution $\mathbf Q$ of \eqref{riccati3} as follows
\begin{equation}
\label{wconst}
W=e^{-\mathcal A[\mathbf Q]}-e^{\mathcal A[\mathbf Q_1]}\left\{\rot \left(\mathbf B\left[e^{-2\mathcal A[\mathbf Q_1]}\grad (e^{-\mathcal A[\mathbf Q-\mathbf Q_1]})\right]\right)+ \grad\, h\right\},
\end{equation}
where $h$ is an arbitrary harmonic function.
\end{theorem}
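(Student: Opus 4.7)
The plan is to reduce everything to the linear complex quaternionic Vekua-type equation \eqref{vekua} by manufacturing, out of the particular solution $\mathbf Q_1$, a non-vanishing scalar solution $\varphi$ of the Schr\"odinger equation \eqref{schrodinger3}. Concretely I would set $\varphi := e^{-\mathcal A[\mathbf Q_1]}$; by Theorem~\ref{expA} this is a scalar solution of \eqref{schrodinger3} with the given $q$, and the boundedness of $\mathbf Q_1$ guarantees $\varphi\neq 0$ on $\Omega$ so that $\varphi^{-1}$, $\grad h/\varphi$, etc.\ are well defined. A direct differentiation gives $D\varphi/\varphi=-\mathbf Q_1$, so the equation $DW=-\mathbf Q_1\overline W$ is literally $(D-\tfrac{D\varphi}{\varphi}C_H)W=0$, i.e.\ equation \eqref{vekua}.

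For the first implication I would start from a solution $W$ of \eqref{1order} and invoke Proposition~\ref{prop1}: its scalar part $W_0=\mathrm{Sc}\,W$ is a solution of the same Schr\"odinger equation \eqref{schrodinger3}. Applying Theorem~\ref{cole3} to $W_0$ then yields that $\mathbf Q=-D(\mathrm{Sc}\,W)/\mathrm{Sc}\,W$ is a solution of the Riccati equation \eqref{riccati3}, which is precisely \eqref{dscw}. (One should note in passing that the condition $W_0\neq 0$, needed to form the quotient, holds generically and is the analogue of the corresponding hypothesis in the classical first Euler reduction.)

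For the converse I would start from an arbitrary solution $\mathbf Q$ of \eqref{riccati3} and set $W_0:=e^{-\mathcal A[\mathbf Q]}$; by Theorem~\ref{expA} this is a scalar solution of \eqref{schrodinger3} with the very same potential $q=D\mathbf Q_1+|\mathbf Q_1|^2=\Delta\varphi/\varphi$. I would then feed this $W_0$ into the construction in Proposition~\ref{propfortheo}, which produces a vector part
\[
\mathbf W=-\varphi^{-1}\Big\{\rot\Big(\mathbf B\big[\varphi^2\grad(W_0/\varphi)\big]\Big)+\grad h\Big\}
\]
such that $W=W_0+\mathbf W$ solves \eqref{vekua}. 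Substituting $\varphi=e^{-\mathcal A[\mathbf Q_1]}$ and $W_0=e^{-\mathcal A[\mathbf Q]}$, together with the linearity identity $\mathcal A[\mathbf Q]-\mathcal A[\mathbf Q_1]=\mathcal A[\mathbf Q-\mathbf Q_1]$ (true because $\mathcal A$ is integration, so the combined integrand is $\mathbf Q-\mathbf Q_1$), turns the ratio $W_0/\varphi$ into $e^{-\mathcal A[\mathbf Q-\mathbf Q_1]}$ and the weight $\varphi^2$ into $e^{-2\mathcal A[\mathbf Q_1]}$, reproducing formula \eqref{wconst} term by term.

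The main obstacle I anticipate is purely bookkeeping rather than conceptual: one has to verify carefully that the constants of integration hidden in the three copies of $\mathcal A$ (one in $\varphi$, one in $W_0$, one in the exponent $\mathcal A[\mathbf Q-\mathbf Q_1]$) can be chosen consistently, and that the resulting $\varphi^{-1}\grad h$ inside $\mathbf W$ matches the $\grad h/\varphi$ that appears if one instead derived the formula from \eqref{solW}; both arbitrariness terms are harmonic-function freedoms of the same inhomogeneous linear problem. Once these identifications are made, Proposition~\ref{prop1} and Proposition~\ref{propfortheo} provide the two inverse constructions, and the equivalence between \eqref{riccati3} and \eqref{1order} follows.
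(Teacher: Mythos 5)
Your proposal is correct and follows essentially the same route as the paper's own proof: identify \eqref{1order} with \eqref{vekua} via $\varphi=e^{-\mathcal A[\mathbf Q_1]}$ (Theorem~\ref{expA}), then use Proposition~\ref{prop1} together with Theorem~\ref{cole3} for one direction and Proposition~\ref{propfortheo} with the substitutions $\varphi=e^{-\mathcal A[\mathbf Q_1]}$, $W_0=e^{-\mathcal A[\mathbf Q]}$ for the other. Your explicit remark on the linearity $\mathcal A[\mathbf Q]-\mathcal A[\mathbf Q_1]=\mathcal A[\mathbf Q-\mathbf Q_1]$ and on the harmonic-function freedom is a detail the paper leaves implicit, but it does not change the argument.
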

\begin{proof}
Let $\mathbf Q_1$ be a bounded solution of \eqref{riccati3}. By using theorem \ref{expA} we have that $\frac{D\varphi}{\varphi}=-\mathbf Q_1$, where $\varphi$ is a nonvanishing solution of the Schr\"odinger equation \eqref{schrodinger3} with $q=\frac{\Delta \varphi}{\varphi}$. Now equation \eqref{1order} can be rewritten as \eqref{vekua}. In the same way we obtain  $\mathbf Q=-\frac{D\psi}{\psi}$, where $\psi$ is a solution of the Schr\"odinger equation \eqref{schrodinger3}. Using proposition \ref{propfortheo}, the function $\psi$ is the scalar part of a solution $W$ of \eqref{vekua}, i.e. $\psi=\mathrm{Sc}\, W$. The first part of the theorem is therefore shown.

Let us suppose now that $W=W_0+\mathbf W$ is a solution of \eqref{1order}. Again, we have that $\frac{D\varphi}{\varphi}=-\mathbf Q_1$. According to proposition \ref{prop1}, $W_0$ is a solution of the Schr\"odinger equation \eqref{schrodinger3} with $q=\frac{\Delta \varphi}{\varphi}$ and $\mathbf Q=-\frac{DW_0}{W_0}$ is a solution of the Riccati equation \eqref{riccati3} by theorem \ref{cole3}.  Using proposition \ref{propfortheo} we obtain 
\begin{equation}
W=W_0-\varphi^{-1}\left\{\rot\left(\mathbf B\left[\varphi^2\grad\left(\frac{W_0}{\varphi}\right)\right]\right)+\grad\,h\right\}. 
\end{equation}
From theorem \ref{expA}, it is now sufficient to make the substitutions $\varphi=e^{-\mathcal A[\mathbf Q_1]}$ and $W_0=e^{-\mathcal A[\mathbf Q]}$ to obtain equation \eqref{wconst}.
\end{proof}

\begin{theorem}
\label{picard}
Let $\mathbf Q_k$, $k=1,2,3,4$, be four solutions of the Riccati equation \eqref{riccati3}. Then we have
\begin{equation}
\label{eqpicard}
\begin{array}{l}
\displaystyle \frac{D(\mathbf Q_1-\mathbf Q_2)-2 (\mathbf Q_1\times \mathbf Q_2)}{\mathbf Q_1-\mathbf Q_2}+\displaystyle \frac{D(\mathbf Q_3-\mathbf Q_4)-2 (\mathbf Q_3\times \mathbf Q_4)}{\mathbf Q_3-\mathbf Q_4} \\*[4ex]
-\displaystyle \frac{D(\mathbf Q_1-\mathbf Q_4)-2 (\mathbf Q_1\times \mathbf Q_4)}{\mathbf Q_1-\mathbf Q_4}-\frac{D(\mathbf Q_3-\mathbf Q_2)-2 (\mathbf Q_3\times \mathbf Q_2)}{\mathbf Q_3-\mathbf Q_2}=\mathbf 0.
\end{array}
\end{equation}
\end{theorem}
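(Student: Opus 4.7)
The plan is to show that each of the four quotients in \eqref{eqpicard} collapses to a very simple expression of the form $\mathbf{Q}_i+\mathbf{Q}_j$, after which the four terms telescope to zero algebraically without any differential manipulation.

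First I would subtract the Riccati equation \eqref{riccati3} written for $\mathbf{Q}_i$ from the same equation written for $\mathbf{Q}_j$ to get $D(\mathbf{Q}_i-\mathbf{Q}_j)=|\mathbf{Q}_j|^2-|\mathbf{Q}_i|^2$. Because each $\mathbf{Q}_k$ is purely vectorial, $|\mathbf{Q}_k|^2=-\mathbf{Q}_k^2$, so this rewrites as $D(\mathbf{Q}_i-\mathbf{Q}_j)=\mathbf{Q}_i^2-\mathbf{Q}_j^2$. This is the only place the hypothesis that the $\mathbf{Q}_k$ solve \eqref{riccati3} enters.

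Next I would use the biquaternionic product rule $\mathbf{x}\mathbf{y}=-\langle\mathbf{x},\mathbf{y}\rangle+\mathbf{x}\times\mathbf{y}$ to compute $(\mathbf{Q}_i+\mathbf{Q}_j)(\mathbf{Q}_i-\mathbf{Q}_j)=\mathbf{Q}_i^2-\mathbf{Q}_j^2+\mathbf{Q}_j\mathbf{Q}_i-\mathbf{Q}_i\mathbf{Q}_j$, together with $\mathbf{Q}_j\mathbf{Q}_i-\mathbf{Q}_i\mathbf{Q}_j=-2\,\mathbf{Q}_i\times\mathbf{Q}_j$. Combining with the previous step yields the key identity
\begin{equation*}
D(\mathbf{Q}_i-\mathbf{Q}_j)-2\,\mathbf{Q}_i\times\mathbf{Q}_j=(\mathbf{Q}_i+\mathbf{Q}_j)(\mathbf{Q}_i-\mathbf{Q}_j).
\end{equation*}
Since $\mathbf{Q}_i-\mathbf{Q}_j$ is a pure vector, it has a two-sided inverse $-(\mathbf{Q}_i-\mathbf{Q}_j)/|\mathbf{Q}_i-\mathbf{Q}_j|^2$ whenever its modulus is nonzero, so right-multiplication gives
\begin{equation*}
\frac{D(\mathbf{Q}_i-\mathbf{Q}_j)-2\,\mathbf{Q}_i\times\mathbf{Q}_j}{\mathbf{Q}_i-\mathbf{Q}_j}=\mathbf{Q}_i+\mathbf{Q}_j.
\end{equation*}

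Finally I would apply this identity to each of the four quotients appearing in \eqref{eqpicard}: the left-hand side becomes $(\mathbf{Q}_1+\mathbf{Q}_2)+(\mathbf{Q}_3+\mathbf{Q}_4)-(\mathbf{Q}_1+\mathbf{Q}_4)-(\mathbf{Q}_3+\mathbf{Q}_2)$, which vanishes term by term. The only subtlety — and the reason the paper writes the numerator as $D(\mathbf{Q}_i-\mathbf{Q}_j)-2\mathbf{Q}_i\times\mathbf{Q}_j$ rather than $D(\mathbf{Q}_i-\mathbf{Q}_j)$ — is that quaternionic multiplication is noncommutative, so the cross-product correction term is exactly what is needed to allow $(\mathbf{Q}_i-\mathbf{Q}_j)$ to factor on the right. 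Once this is pinpointed, the rest of the argument is mechanical.
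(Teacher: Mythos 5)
Your proposal is correct and is essentially the paper's own argument run in the forward direction: the paper starts from the trivial identity $(\mathbf Q_1+\mathbf Q_4)+(\mathbf Q_3+\mathbf Q_2)-(\mathbf Q_1+\mathbf Q_2)-(\mathbf Q_3+\mathbf Q_4)=\mathbf 0$ and shows each $\mathbf Q_i+\mathbf Q_j$ equals the corresponding quotient, using exactly your two ingredients (subtracting the Riccati equations to trade $|\mathbf Q_i|^2-|\mathbf Q_j|^2$ for $-D(\mathbf Q_i-\mathbf Q_j)$, and identifying the commutator $\mathbf Q_i\mathbf Q_j-\mathbf Q_j\mathbf Q_i$ with $2\,\mathbf Q_i\times\mathbf Q_j$). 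Your explicit caveat about invertibility of $\mathbf Q_i-\mathbf Q_j$ (nonzero complex modulus, since biquaternions admit zero divisors) is a point the paper leaves implicit.
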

\begin{proof}
From the trivial equation
\begin{equation}
C_H\Big[(\mathbf Q_1+\mathbf Q_4)+(\mathbf Q_3+\mathbf Q_2)-(\mathbf Q_1+\mathbf Q_2)-(\mathbf Q_3+\mathbf Q_4)\Big]=\mathbf 0,
\end{equation}
let us consider the first term:
$$
\begin{array}{rcl}
C_H(\mathbf Q_1+\mathbf Q_4)&=& -\displaystyle \frac{(\mathbf Q_1+\mathbf Q_4)(\mathbf Q_1-\mathbf Q_4)}{\mathbf Q_1-\mathbf Q_4} \\*[2ex]
&=& \displaystyle \frac{|\mathbf Q_1|^2+\mathbf Q_1\mathbf Q_4-\mathbf Q_4\mathbf Q_1-|\mathbf Q_4|^2}{\mathbf Q_1-\mathbf Q_4}\\*[2ex]
&=& \displaystyle \frac{(q-|\mathbf Q_4|^2)-(q-|\mathbf Q_1|^2)+\mathbf Q_1\mathbf Q_4-\mathbf Q_4\mathbf Q_1}{\mathbf Q_1-\mathbf Q_4}\\*[2ex]
&=& \displaystyle \frac{D\mathbf Q_4-D\mathbf Q_1+\mathbf Q_1\mathbf Q_4-\mathbf Q_4\mathbf Q_1}{\mathbf Q_1-\mathbf Q_4}\\*[2ex]
&=& -\displaystyle \frac{D(\mathbf Q_1-\mathbf Q_4)-2(\mathbf Q_1\times \mathbf Q_4)}{\mathbf Q_1-\mathbf Q_4}.
\end{array}
$$
Doing similar calculus for the other terms, we obtain the desired result.
\end{proof}

In this last proof, we see that the cross product terms in \eqref{eqpicard} are a consequence of the noncommutativity of (complex) quaternions. The analogue part in the classical Picard's theorem are obviously null.

\section{Symmetries of the spatial Riccati equation}
\subsection{General theory for PDE's}

The purpose to find Lie point symmetries of a system of PDE's is to obtain a local group of transformations that maps every solution of the same system. In other words, it map the solution set of the system to itself. 

Let us consider a general system of differential equations
\begin{equation}
\begin{array}{c}
\label{eqgen}
E^i(x,u,u^{(1)},\ldots,u^{(n)})=0, \\*[2ex]
x\in \R^p,\ u\in \R^q,\ i=1,\ldots, m,\qquad p,q,m,n\in \mathbb Z^{>0},
\end{array}
\end{equation}
where $x$ and $u$ are independent and dependent variables, respectively, and $u^{(k)}$ denotes all partial derivatives of order $k$ of all components of $u$. We are now looking to obtain the local Lie group $\mathcal G$ of local Lie point transformations 
\begin{equation}
\widetilde x=\Lambda_\lambda(x,u),\qquad \widetilde u=\Omega_\lambda(x,u),
\end{equation}
taking solutions of \eqref{eqgen} into solutions, where $\lambda$ are the group parameters. The functions $\Lambda_\lambda$, $\Omega_\lambda$ are well defined for $\lambda$ close to the identity, $\lambda=e$, and for $x$ and $u$ close to some origin in the space $M\subset X\times U$ ($X\sim \R^p$, $U\sim\R^q$) of independent and dependent variables. In other words, the group $\mathcal G$ acts on a manifold $M$ such a manner that whenever $u=f(x)$ is a solution of \eqref{eqgen}, then so is $\widetilde u(\widetilde x)=\lambda\cdot f(x)$. 

Following S. Lie, as presented e.g. in \cite{Olver}, we shall look for infinitesimal group transformations, i.e.  construct the Lie algebra $\mathcal L$ of the Lie group $\mathcal G$. For the purpose of the paper, we shall present the algorithm in a purely utilitarian manner. For the proofs we refer to \cite{Olver}. The Lie algebra $\mathcal L$ will be realized in terms of the vector fields on $M$, i.e. differential operators
\begin{equation}
\label{vhat}
\widehat v=\displaystyle \sum_{i=1}^p \xi^i(x,u)\partial_{x_i}+\sum_{\alpha=1}^q \phi_\alpha(x,u)\partial_{u^\alpha},
\end{equation}
where $\partial_{x_i}:=\frac{\partial}{\partial x_i}$ and $\partial_{u^\alpha}:=\frac{\partial}{\partial u^\alpha}$. The functions $\xi^i$ and $\phi_\alpha$ are to be determined. The algorithm for determining the Lie algebra $\mathcal L$ of the symmetry group $\mathcal G$ of the system \eqref{eqgen} is given by
\begin{equation}
\label{algo}
\mathrm{pr}^{(n)}\widehat v\cdot E^i\Big|_{E^j=0}=0,\qquad i,j=1,\ldots,m,
\end{equation}
where the $n$-th prolongation of the vector fields \eqref{vhat} is given by
\begin{equation}
\begin{array}{c}
\label{prv}
\mathrm{pr}^{(n)}\widehat v=\widehat v+\displaystyle \sum_{\alpha=1}^q \sum_J \phi_{\alpha}^J \frac{\partial}{\partial u_{J}^{\alpha}} \\*[3ex]
J\equiv J(k)=(j_1,\ldots,j_k),\qquad 1\leq j_k\leq p,\qquad k=j_1+\cdots +j_k.
\end{array}
\end{equation}
The prolongation is defined on the corresponding jet space $M^{(n)}\subset X\times U^{(n)}$. Here the coefficients $\phi_{\alpha}^{J}$ depend on $x,u$ and the derivatives of $u$ up to order $k$ for $J=J(k)$. These coefficients are obtained by the following formula \cite{Olver} 
\begin{equation}
\label{coeffphi}
\phi_{\alpha}^J=D_J\left(\phi_\alpha+\sum_{i=1}^p \xi^i u_i^\alpha\right)+\sum_{i=1}^p  \xi^i u_{J,i}^\alpha,
\end{equation}
where $D_J$ is the total derivative operator, $u_i^\alpha:=\frac{\partial u^\alpha}{\partial x_i}$ and $u_{J,i}^\alpha:=\frac{\partial u_J^\alpha}{\partial x_i}$.

Condition \eqref{algo} provides a system of linear ordinary differential equations of order $n$ for the coefficients $\xi^i$ and $\phi_\alpha$. The {\em determining equations} for the symmetry operator $\widehat v$ are obtained by setting equal to zero the coefficients of each linearly independent expressions in the derivative of the $u^\alpha$. Each vector field \eqref{vhat} provides a one parameter Lie group $\mathcal G(\lambda)$ obtained by integrating the system
\begin{equation}
\label{integratev}
\frac{d\widetilde x^i}{d\lambda}=\xi^i(\widetilde x,\widetilde u), \quad \widetilde x^i\Big|_{\lambda=0}=x^i, \quad \qquad 
\frac{d\widetilde u^\alpha}{d\lambda}=\phi^\alpha(\widetilde x,\widetilde u), \quad \widetilde u^\alpha\Big|_{\lambda=0}=u^\alpha.
\end{equation}


\subsection{Lie point symmetries of the quaternionic Riccati \\ equation}
We now apply the Lie symmetry tool, presented in the last subsection, for the case of the quaternionic Riccati equation \eqref{riccati3}. Decomposing explicitly equation \eqref{riccati3} in terms of its components, where for convenience we set $\mathbf Q:=ue_1+ve_2+we_3$ for $u,v,w,q\in C^1(\Omega,\R)$, the system \eqref{riccatidec} is then equivalent to
\begin{subequations}
\begin{eqnarray}
\label{E1}
E_1 &\equiv & -(\partial_1 u+\partial_2 v+\partial_3 w)+(u^2+v^2+w^2)-q =0, \\
\label{E2}
E_2 &\equiv &\partial_2 w-\partial_3 v =0, \\
\label{E3}
E_3 &\equiv & \partial_3 u-\partial_1 w =0, \\
\label{E4}
E_4 &\equiv &\partial_1 v-\partial_2 u =0. 
\end{eqnarray}
\end{subequations}
For this system of equations the vector fields \eqref{vhat} can be expressed as
\begin{equation}
\label{v}
\widehat v = \xi \partial_{x}+\eta\partial_{y}+ \tau \partial_{z} +\phi \partial_u+\psi \partial_v+ \zeta \partial_w, 
\end{equation}
where the real-valued coefficient functions $\xi,\ldots,\zeta$ depend on the three independent variables $x,y,z$ and the three dependent variables $u,v,w$. Considering now the algorithm given by \eqref{algo}, i.e. applying the vector fields \eqref{v} on the system of equations \eqref{E1},...,\eqref{E4} on the solution set of the Riccati equation \eqref{riccati3}, we find a system of 37 determining equations (the calculation was checked using {\em PDEtools} in Maple~16). We find that the generator $\widehat v$ must actually have the form
\begin{equation}
\label{vparameters}
\begin{array}{l}
\widehat v = \Big[a_1 \big(x^2- (y^2+z^2)\big) +2 a_2 x y + 2 a_3 x z - a_4 z + a_5 x - a_6 y + a_9\Big]\partial_x\\*[2ex]
+ \Big[2 a_1 x y + a_2 \big(y^2 - (x^2+z^2)\big) +2 a_3 y z + a_5 y + a_6 x - a_7 z + a_{10}\Big] \partial_y\\*[2ex]
+ \Big[2 a_1 x z + 2 a_2 y z + a_3 \big(z^2 - (x^2+y^2)\big) + a_4 x + a_5 z+a_7 y + a_8\Big] \partial_z\\*[2ex]
+ \Big[a_1 \big(1-2 (x u + y v + z w)\big) + 2a_2 (x v - yu) + 2 a_3 (x w - zu) \\*[2ex] - a_4 w - a_5 u - a_6 v \Big] \partial_u + \Big[2 a_1 (y u - x v) + a_2 \big(1-2(x u + y v + z w)\big) \\*[2ex] - 2 a_3 (z v - y w) - a_5 v + a_6 u - a_7 w \Big] \partial_v +\Big[2 a_1 (z u - x w)\\*[2ex] + 2 a_2 (z v - y w) + a_3 \big(1-2(x u + y v + z w)\big) + a_4 u - a_5 w + a_7 v\Big]\partial_w,
\end{array}
\end{equation}
for some real parameters $a_1, \ldots, a_{10}$. The potential $q$ being an arbitrary real-valued function is subject to one further determining equation that involve $q$ explicitly:
\begin{equation}
\label{condq}
\begin{array}{l}
\Big[ a_1 (y^2+z^2 - x^2) -2 a_2 x y - 2 a_3 x z + a_4 z - a_5 x + a_6 y - a_9\Big] q_x\\*[2ex]
+\Big[ -2 a_1 x y + a_2 (x^2 + z^2 - y^2) - 2 a_3 y z - a_5 y - a_6 x + a_7 z - a_{10}\Big] q_y\\*[2ex]
+\Big[ -2 a_1 x z - 2 a_2 y z + a_3 (x^2 + y^2 - z^2) - a_4 x - a_5 z - a_7 y - a_8\Big] q_z \\*[2ex]
- 2\Big[a_5 + 2(a_1 x + a_2 y + a_3 z)\Big] q = 0.
\end{array}
\end{equation}
For each non-zero parameters $a_k$, $k=1,\ldots, 10$, a Lie point symmetry of the Riccati equation \eqref{riccati3} exists if and only if a solution of \eqref{condq} is obtained. Hence, solving equation \eqref{condq} for each parameter $a_k$, i.e. obtaining the form of the potential function $q$, the associate generator $\widehat v$ is given from \eqref{vparameters}. For convenience in what follows we define $\mathbf x:=(x,y,z)$, $\mathbf Q(\mathbf x):=(u(\mathbf x),v(\mathbf x),w(\mathbf x))$, $r:=\sqrt{x^2+y^2+z^2}$ and the function $c(\mathbf x,\mathbf y):=1-2\langle \mathbf x,\mathbf y\rangle$. We present the results in the following table:

\begin{center}
\begin{tabular}{|l|l|}
\hline
{\bf Vector fields} & {\bf Potentials} $q(x,y,z)$ \\*[2ex]
\hline\hline
$\widehat v_1=\partial_x$ & $q=F_1(y,z)$  \\*[2ex]
$\widehat v_2=\partial_y$ & $q=F_2(x,z)$ \\*[2ex]
$\widehat v_3=\partial_z$ & $q=F_3(x,y)$ \\*[2ex]
$\widehat v_4=y\partial_z-z\partial_y+v\partial_w-w\partial_v$ & $q=F_4\big(x,\sqrt{y^2+z^2}\big)$ \\*[2ex]
$\widehat v_5=z\partial_x-x\partial_z+w\partial_u-u\partial_w$ & $q=F_5\big(y,\sqrt{x^2+z^2}\big)$ \\*[2ex]
$\widehat v_6=x\partial_y-y\partial_x+u\partial_v-v\partial_u$ & $q=F_6\big(z,\sqrt{x^2+y^2}\big)$ \\*[2ex]
$\widehat v_7=x\partial_x+y\partial_y+z\partial_z-u\partial_u-v\partial_v-w\partial_w$ & $q=x^{-2}F_7\Big(\displaystyle \frac{y}{x},\displaystyle \frac{z}{x}\Big)$ \\*[2ex]
$\widehat v_8=[x^2-(y^2+z^2)]\partial_x+2xy\partial_y+2xz\partial_z $ & $q=r^{-4}F_8\Big(\displaystyle \frac{y}{r^2},\displaystyle \frac{z}{r^2}\Big)$\\
\hspace*{6mm}$+c\big(\mathbf x,\mathbf Q(\mathbf x)\big)\partial_u+2(yu-xv)\partial_v+2(zu-xw)\partial_w$ & \\*[2ex]
$\widehat v_9=2xy\partial_x+[y^2-(x^2+z^2)]\partial_y+2yz\partial_z $ & $q=r^{-4}F_9\Big(\displaystyle \frac{x}{r^2},\displaystyle \frac{z}{r^2}\Big)$\\
\hspace*{6mm}$+2(xv-yu)\partial_u+c\big(\mathbf x,\mathbf Q(\mathbf x)\big)\partial_v+2(zv-yw)\partial_w$ & \\*[2ex]
$\widehat v_{10}=2xz\partial_x+2yz\partial_y+[z^2-(x^2+y^2)]\partial_z$ & $q=r^{-4}F_{10}\displaystyle \Big(\frac{x}{r^2},\displaystyle \frac{y}{r^2}\Big)$\\
\hspace*{6mm}$+2(xw-zu)\partial_u+2(yw-zv)\partial_v+c\big(\mathbf x,\mathbf Q(\mathbf x)\big)\partial_w$ & \\
\hline
\end{tabular}\\*[2ex]
{\bf Table 1}
\end{center}

Generators $\widehat v_1,\ldots, \widehat v_3$ represent translations in the space of independent variables. The generators $\widehat v_4,\ldots, \widehat v_6$ represent simultaneous rotations in the space of independent and dependent variables and $\widehat v_7$ represents dilations. Finally, the generators $\widehat v_8,\ldots, \widehat v_{10}$ represent conical symmetries.

In what follows, we define the function $\alpha(x,r,\lambda):=r^2\lambda^2-2x\lambda+1$ and $\mathbf x^\top$ represents the column vector of $\mathbf x$. In particular, we note that $\alpha(x,r,\lambda)$ is a quadratic polynomial in $\lambda$ without real roots except on the $x$-axis. The one-parameter groups $\mathcal G_k$ generated by the $\widehat v_k$ are obtained using \eqref{integratev}. 

$$
\begin{array}{lll}
\mathcal G_1: &\widetilde {\mathbf x}=\mathbf x + \lambda \mathbf e_1, & \widetilde{\mathbf Q}=\mathbf Q, \qquad \mathbf e_1:=(1,0,0), \\
\mathcal G_2: &  \widetilde{\mathbf x}=\mathbf x + \lambda \mathbf e_2, & \widetilde{\mathbf Q}=\mathbf Q,  \qquad \mathbf e_2:=(0,1,0),\\
\mathcal G_3: & \widetilde{\mathbf x}=\mathbf x + \lambda \mathbf e_3, & \widetilde{\mathbf Q}=\mathbf Q, \qquad \mathbf e_3:=(0,0,1), \\*[2ex]
 \mathcal G_4: &  \widetilde {\mathbf x}^\top= R_1(\lambda)\mathbf x^\top, & \widetilde{\mathbf Q}^\top= R_1(\lambda)\mathbf Q^\top, \quad  R_1(\lambda):=\left(\begin{array}{ccc} 1 & 0 & 0 \\
0 & \cos \lambda & \sin \lambda \\
0 & -\sin \lambda & \cos\lambda
\end{array}\right),\\*[4ex]
 \mathcal G_5 :&   \widetilde {\mathbf x}^\top= R_2(\lambda)\mathbf x^\top, & \widetilde{\mathbf Q}^\top= R_2(\lambda)\mathbf Q^\top, \quad R_2(\lambda):= \left(\begin{array}{ccc} \cos \lambda & 0 & -\sin \lambda \\
0 & 1 & 0 \\
\sin \lambda & 0 & \cos\lambda
\end{array}\right),\\*[4ex]
 \mathcal G_6 :&   \widetilde {\mathbf x}^\top= R_3(\lambda)\mathbf x^\top, & \widetilde{\mathbf Q}^\top= R_3(\lambda)\mathbf Q^\top, \quad R_3(\lambda):= \left(\begin{array}{ccc} \cos\lambda & \sin\lambda & 0 \\
-\sin\lambda & \cos \lambda & 0 \\
0 & 0 & 1
\end{array}\right),
\\*[2ex]
\mathcal G_7: &\widetilde {\mathbf x}=e^\lambda \mathbf x, & \mathbf{\widetilde Q}=e^{-\lambda}\mathbf Q,  \\*[2ex]
\mathcal G_8: & \widetilde{\mathbf x}=\displaystyle\frac{\mathbf x-\lambda r^2\mathbf e_1}{\alpha(x,r,\lambda)}, & \widetilde{\mathbf Q}^\top=\left(\begin{array}{l} u+c(\mathbf x,\mathbf Q)\lambda-\big(r^2u+c(\mathbf x,\mathbf Q)x\big)\lambda^2 \\
\alpha(x,r,\lambda)v+ y\big(2u\lambda+c(\mathbf x,\mathbf Q)\lambda^2\big) \\ \alpha(x,r,\lambda)w+ z\big(2u\lambda+c(\mathbf x,\mathbf Q)\lambda^2\big)
\end{array}\right),\ y^2+z^2\neq 0, \\*[4ex]
&   \widetilde{\mathbf x}= \big(\displaystyle \frac{x}{1-x\lambda},0,0\big), & \widetilde{\mathbf Q}^\top=\left(\begin{array}{l} 
(1-x\lambda)\big[u+\lambda(1-xu)] \\
v(1-x\lambda)^2 \\
w(1-x\lambda)^2 
\end{array}\right),\ y^2+z^2= 0, \\*[4ex]

\mathcal G_9: & \widetilde{\mathbf x}=\displaystyle\frac{\mathbf x-\lambda r^2\mathbf e_2}{\alpha(y,r,\lambda)}, & \widetilde{\mathbf Q}^\top=\left(\begin{array}{l} \alpha(y,r,\lambda)u+ x\big(2v\lambda+c(\mathbf x,\mathbf Q)\lambda^2\big) \\ v+c(\mathbf x,\mathbf Q)\lambda-\big(r^2v+c(\mathbf x,\mathbf Q)y\big)\lambda^2 \\ \alpha(y,r,\lambda)w+ z\big(2v\lambda+c(\mathbf x,\mathbf Q)\lambda^2\big) \\ \end{array}\right),\ x^2+z^2\neq 0, \\*[4ex]
&   \widetilde{\mathbf x}= \big(0,\displaystyle \frac{y}{1-y\lambda},0\big), & \widetilde{\mathbf Q}^\top=\left(\begin{array}{l} 
u(1-y\lambda)^2 \\
(1-y\lambda)\big[v+\lambda(1-yv)]\\
w(1-y\lambda)^2 
\end{array}\right),\ x^2+z^2= 0, \\*[4ex]

\mathcal G_{10}: & \widetilde{\mathbf x}=\displaystyle\frac{\mathbf x-\lambda r^2\mathbf e_3}{\alpha(z,r,\lambda)}, & \widetilde{\mathbf Q}^\top=\left(\begin{array}{l} \alpha(z,r,\lambda)u+ x\big(2w\lambda+c(\mathbf x,\mathbf Q)\lambda^2\big) \\ \alpha(z,r,\lambda)v+ y\big(2w\lambda+c(\mathbf x,\mathbf Q)\lambda^2\big) \\ w+c(\mathbf x,\mathbf Q)\lambda-\big(r^2w+c(\mathbf x,\mathbf Q)z\big)\lambda^2\end{array}\right),\ x^2+y^2\neq 0, \\*[4ex]
&  \widetilde{\mathbf x}= \big(0,0,\displaystyle \frac{z}{1-z\lambda}\big), & \widetilde{\mathbf Q}^\top=\left(\begin{array}{l} 
u(1-z\lambda)^2 \\
v(1-z\lambda)^2 \\
(1-z\lambda)\big[w+\lambda(1-zw)]
\end{array}\right),\ x^2+y^2= 0.
\end{array}
$$

Since each group $\mathcal G_k$ is a symmetry group associated with a potential $q$ written in term of $F_k$, if $\mathbf Q(\mathbf x)=\Big(f(\mathbf x),\ g(\mathbf x),\ h(\mathbf x)\Big)$ is a solution of \eqref{riccati3} so are the functions


$$
\begin{array}{rcl}
\mathbf Q^{(1)}&=& \Big(f(\mathbf x-\lambda \mathbf e_1),\ g(\mathbf x-\lambda \mathbf e_1),\ h(\mathbf x-\lambda \mathbf e_1)\Big),\\
\mathbf Q^{(2)}&=& \Big(f(\mathbf x-\lambda \mathbf e_2),\ g(\mathbf x-\lambda \mathbf e_2),\ h(\mathbf x-\lambda \mathbf e_2)\Big),\\
\mathbf Q^{(3)}&=& \Big(f(\mathbf x-\lambda \mathbf e_3),\ g(\mathbf x-\lambda \mathbf e_3),\ h(\mathbf x-\lambda \mathbf e_3)\Big), \\*[2ex]
\mathbf Q^{(4)}&=& R_1(\lambda)\Big(f(R_1^{\top}\mathbf x),\ g(R_1^{\top}\mathbf x),\ h(R_1^{\top}\mathbf x)\Big)^\top,\\*[2ex]
\mathbf Q^{(5)}&=& R_2(\lambda)\Big(f(R_2^{\top}\mathbf x),\ g(R_2^{\top}\mathbf x),\ h(R_2^{\top}\mathbf x)\Big)^\top,
 \\*[2ex]
\mathbf Q^{(6)}&=& R_3(\lambda)\Big(f(R_3^{\top}\mathbf x),\ g(R_3^{\top}\mathbf x),\ h(R_3^{\top}\mathbf x)\Big)^\top,
 \\*[2ex]
\mathbf Q^{(7)}&=&e^{-\lambda}\Big(f(e^{-\lambda}\mathbf x),\  g(e^{-\lambda}\mathbf x),\ h(e^{-\lambda}\mathbf x)\Big), \\*[2ex]
\mathbf Q^{(8a)}&=& \Big(f(\frac{\mathbf x+r^2\lambda \mathbf e_1}{\alpha(-x,r,\lambda)})+c\big(\mathbf x,\mathbf Q(\frac{\mathbf x+r^2\lambda \mathbf e_1}{\alpha(-x,r,\lambda)})\big)(\lambda-x\lambda^2)-r^2\lambda^2 f\big(\frac{\mathbf x+r^2\lambda \mathbf e_1}{\alpha(-x,r,\lambda)}\big),\\*[2ex]
&& \alpha(x,r,\lambda)g(\frac{\mathbf x+r^2\lambda \mathbf e_1}{\alpha(-x,r,\lambda)})+y\Big[2f(\frac{\mathbf x+r^2\lambda \mathbf e_1}{\alpha(-z,r,\lambda)})\lambda+ c\big(\mathbf x,\mathbf Q(\frac{\mathbf x+r^2\lambda \mathbf e_1}{\alpha(-x,r,\lambda)})\big)\lambda^2\Big],\\*[2ex]
&& \alpha(x,r,\lambda)h(\frac{\mathbf x+r^2\lambda \mathbf e_1}{\alpha(-x,r,\lambda)})+z\Big[2f(\frac{\mathbf x+r^2\lambda \mathbf e_1}{\alpha(-x,r,\lambda)})\lambda+ c\big(\mathbf x,\mathbf Q(\frac{\mathbf x+r^2\lambda \mathbf e_1}{\alpha(-x,r,\lambda)})\big)\lambda^2\Big]\Big),\quad \text{for } y^2+z^2\neq 0,\\*[2ex]
\mathbf Q^{(8b)} &=& \Big((1-x\lambda)^2f(0,0,{\frac{x}{1+x\lambda}})+(1-x\lambda)\lambda,\ (1-x\lambda)^2g(0,0,{\frac{x}{1+x\lambda}}),\\*[2ex] && (1-x\lambda)^2 h(0,0,{\frac{x}{1+x\lambda}})\Big),\quad \text{for }y^2+z^2=0\\*[2ex]

\mathbf Q^{(9a)}&=& \Big( \alpha(y,r,\lambda)f(\frac{\mathbf x+r^2\lambda \mathbf e_2}{\alpha(-y,r,\lambda)})+x\Big[2g(\frac{\mathbf x+r^2\lambda \mathbf e_2}{\alpha(-y,r,\lambda)})\lambda+ c\big(\mathbf x,\mathbf Q(\frac{\mathbf x+r^2\lambda \mathbf e_2}{\alpha(-y,r,\lambda)})\big)\lambda^2\Big],\\*[2ex] 

&& g(\frac{\mathbf x+r^2\lambda \mathbf e_2}{\alpha(-y,r,\lambda)})+c\big(\mathbf x,\mathbf Q(\frac{\mathbf x+r^2\lambda \mathbf e_2}{\alpha(-y,r,\lambda)})\big)(\lambda-y\lambda^2)-r^2\lambda^2 g\big(\frac{\mathbf x+r^2\lambda \mathbf e_2}{\alpha(-y,r,\lambda)}\big), \\*[2ex]

&& \alpha(y,r,\lambda)h(\frac{\mathbf x+r^2\lambda \mathbf e_2}{\alpha(-y,r,\lambda)})+z\Big[2g(\frac{\mathbf x+r^2\lambda \mathbf e_2}{\alpha(-y,r,\lambda)})\lambda+ c\big(\mathbf x,\mathbf Q(\frac{\mathbf x+r^2\lambda \mathbf e_2}{\alpha(-y,r,\lambda)})\big)\lambda^2\Big]\Big),\ \text{for } x^2+z^2\neq 0,\\*[2ex]

\mathbf Q^{(9b)} &=& \Big((1-y\lambda)^2 f(0,0,{\frac{y}{1+y\lambda}}),\ (1-y\lambda)^2 g(0,0,{\frac{y}{1+y\lambda}})+(1-y\lambda)\lambda,\\*[2ex]
&& (1-y\lambda)^2 h(0,0,{\frac{y}{1+y\lambda}})\Big), \quad \text{for }x^2+z^2=0\\*[2ex]

\mathbf Q^{(10a)}&=& \Big(\alpha(z,r,\lambda)f(\frac{\mathbf x+r^2\lambda \mathbf e_3}{\alpha(-z,r,\lambda)})+x\Big[2h(\frac{\mathbf x+r^2\lambda \mathbf e_3}{\alpha(-z,r,\lambda)})\lambda+ c\big(\mathbf x,\mathbf Q(\frac{\mathbf x+r^2\lambda \mathbf e_3}{\alpha(-z,r,\lambda)})\big)\lambda^2\Big],\\*[2ex]
&& \alpha(z,r,\lambda)g(\frac{\mathbf x+r^2\lambda \mathbf e_3}{\alpha(-z,r,\lambda)})+y\Big[2h(\frac{\mathbf x+r^2\lambda \mathbf e_3}{\alpha(-z,r,\lambda)})\lambda+ c\big(\mathbf x,\mathbf Q(\frac{\mathbf x+r^2\lambda \mathbf e_3}{\alpha(-z,r,\lambda)})\big)\lambda^2\Big],\\*[2ex]
&& h(\frac{\mathbf x+r^2\lambda \mathbf e_3}{\alpha(-z,r,\lambda)})+c\big(\mathbf x,\mathbf Q(\frac{\mathbf x+r^2\lambda \mathbf e_3}{\alpha(-z,r,\lambda)})\big)(\lambda-z\lambda^2)-r^2\lambda^2 h\big(\frac{\mathbf x+r^2\lambda \mathbf e_3}{\alpha(-z,r,\lambda)}\big)\Big),\quad \text{for } x^2+y^2\neq 0,\\*[2ex]
\mathbf Q^{(10b)} &=& \Big((1-z\lambda)^2f(0,0,{\frac{z}{1+z\lambda}}),\ (1-z\lambda)^2g(0,0,{\frac{z}{1+z\lambda}}),\\*[2ex]
&& (1-z\lambda)^2h(0,0,{\frac{z}{1+z\lambda}})+(1-z\lambda)\lambda\Big), \quad \text{for }x^2+y^2=0,
\end{array}
$$
where each new solution $\mathbf Q^{(k)}$ is associated with a potential function $q$ associated with the generator $\widehat v_k$ in Table~1.

\section{Applications using symmetry reductions}
Another application of symmetry methods is to reduce systems of differential equations, finding equivalent system of simpler form. This application is called {\em reduction}. In this section we perform some symmetry reductions of the quaternionic Riccati equation \eqref{riccati3} using symmetry groups obtained in the last section. We consider solutions of these symmetry reductions and combine them with the results found in section~2. We refere to \cite{Olver} chapter 3 for symmetry reduction using Lie group symmetries.

\subsection{Rotationally-invariant solutions}
For the group of rotations about de $z$-axis and the $w$-axis generated by $\widehat v_6=x\partial_y-y\partial_x+u\partial_v-v\partial_u$, invariants are provided by $\rho=\sqrt{x^2+y^2}$, $z$, $\hat{u}=u\cos{\theta} +v\sin{\theta}$, $\hat{v}=-u\sin{\theta} +v\cos{\theta}$ and $\hat{w}=w$. The reduced equations $\eqref{E1},\ldots,\eqref{E4}$ are then
\begin{subequations}
\begin{align}
\label{R1}
E_1 &\equiv -\left(\hat{u}_\rho +\frac{\hat{u}}{\rho}+\hat{w}_z\right)+\hat{u}^2+\hat{v}^2+\hat{w}^2-F_6(z,\rho)=0, \\
\label{R2}
E_2 &\equiv \hat{v}_z+\left(\hat{u}_z-\hat{w}_\rho\right)\tan{\theta}=0, \\
\label{R3}
E_3 &\equiv \hat{u}_z-\hat{w}_\rho -\hat{v}_z\tan{\theta}=0, \\
\label{R4}
E_4 &\equiv \hat{v}_\rho  +\frac{\hat{v}}{\rho}=0,
\end{align}
\end{subequations} 
where $F_6$ is the potential for the generator $\widehat v_6$ in Table 1.

The system of equation $\eqref{R2}$ and $\eqref{R3}$ can be simplified as
\begin{equation}
\label{R5}
E_2' \equiv \hat{u}_z-\hat{w}_\rho=0 
\end{equation}
and
\begin{equation}
\label{R6}
E_3' \equiv \hat{v}_z=0.
\end{equation}
 From $\eqref{R6}$ and $\eqref{R4}$ we find $\hat{v}=\frac{c_1}{\rho}$, where $c_1$ is a real constant. If we further assume translational invariance under $\widehat v_3=\partial_z$, equation (\ref{R5}) implies that $\hat w=c_2$, where $c_2$ is a real constant. We have now to solve equation \eqref{R1} which can be rewritten as

\begin{equation}
\label{RRR}
\hat{u}_\rho = \hat{u}^2-\displaystyle \frac{\hat{u}}{\rho}-\widetilde F_6(\rho),
\end{equation}
where $\widetilde F_6(\rho):=F_6(\rho)-\left(\frac{c_1}{\rho}\right)^2-c_2^2$. Since $F_6$ is an arbitrary function of $\rho$, the constants $c_1, c_2$ can be absorbed such that $c_1=c_2=0$. Equation \eqref{RRR} is a one-dimensional (classical) Riccati equation of the form \eqref{riccati1}. Therefore, following \eqref{linRiccati} equation \eqref{RRR} is equivalent to
\begin{equation}
\label{linriccati2}
g_{\rho\rho}+\frac{g_\rho}{\rho}-\widetilde F_6\,g=0,
\end{equation}
where solution $g(\rho)$ of \eqref{linriccati2} leads to solution $\hat u=-g_{\rho}/g$ of \eqref{RRR}.

Assuming now a specific form of the potential, we consider
\begin{equation}
\widetilde F_6(\rho)=\displaystyle\frac{k^2}{\rho^2},
\end{equation}
where $k$ is a real constant. Solution of equation \eqref{RRR} is given by 

\begin{equation}
\label{SRR}
\hat{u}(\rho)=-\frac{k}{\rho}\frac{\left(\rho^{2k}+e^{2ck}\right)}{\left(\rho^{2k}-e^{2ck}\right)}
\end{equation}
where $c$ is a real constant. Since we have found $\hat u(\rho)$ and that $\hat v(\rho)=0,\ \hat w(\rho)=0$, we can now obtain the inverse transformations
\begin{equation}
\begin{array}{rcl}
u&=&\displaystyle\frac{\hat{u}x}{\sqrt{x^2+y^2}}-\displaystyle\frac{\hat{v}y}{\sqrt{x^2+y^2}},\\*[2ex]
v&=&\displaystyle\frac{\hat{u}y}{\sqrt{x^2+y^2}}+\displaystyle\frac{\hat{v}x}{\sqrt{x^2+y^2}},\\*[2ex]
w&=&\hat{w},
\end{array}
\end{equation}
such that
\begin{equation}
\label{Solution Riccati}
\begin{array}{l}
u=\displaystyle \frac{-kx\left[\left(x^2+y^2\right)^k+e^{2ck} \right]}{\left(x^2+y^2\right)\left[\left(x^2+y^2\right)^k-e^{2ck}\right]},\\*[2ex]
v=\displaystyle\frac{-ky\left[\left(x^2+y^2\right)^k+e^{2ck}\right]}{\left(x^2+y^2\right)\left[\left(x^2+y^2\right)^k-e^{2ck}\right]},\\*[2ex]
w=0.
\end{array}
\end{equation}
Thus for a potential of the form $q=\frac{k^2}{x^2+y^2}$, a solution of the quaternionic Riccati equation  $\eqref{riccati3}$ is given by $\mathbf{Q}=ue_1+ve_2+we_3$.

Now using theorem \ref{expA} a solution $\psi$ of the Schr\"odinger equation $\eqref{schrodinger3}$ can be found using this solution $\mathbf Q$ of the three-dimensional Riccati equation:
\begin{equation}
\begin{array}{rcl}
\psi(x,y,z)&=&\exp{\left(-\mathcal A[\mathbf{Q}]\right)}\\*[2ex]
&=&\exp{\left(- \left(\displaystyle \int_{1}^x{u(\xi,0,0)d\xi}+\displaystyle\int_{0}^y{v(x,\eta,0)d\eta}+\displaystyle\int_{0}^z{w(x,y,\zeta)d\zeta}-\ln{C}\right)\right)}\\*[2ex]
&=&\displaystyle\frac{C\left(\left(x^2+y^2\right)^ke^{-2ck}-1\right)}{\left(x^2+y^2\right)^\frac{k}{2}\left(e^{-2ck}-1\right)}.
\end{array}
\end{equation}
Therefore, a solution of the Schr\"odinger equation $\eqref{schrodinger3}$ is given by
\begin{equation}
\label{Solution Schrodinger}
\psi(x,y,z)=\frac{C\left(\left(x^2+y^2\right)^k-e^{2ck}\right)}{\left(x^2+y^2\right)^\frac{k}{2}\left(1-e^{2ck}\right)}
\end{equation} 
for a potential $q=\frac{k^2}{x^2+y^2}$.

\subsection{Conically-invariant solutions} 
Let us now look at the one-parameter group $\mathcal G_{10}$ of conical symmetry generated  by 
\begin{equation}
\begin{array}{rcl}
\widehat v_{10}&=&2xz\partial_x+2yz\partial_y+[z^2-(x^2+y^2)]\partial_z\\*[2ex]
&+& 2(xw-zu)\partial_u+2(yw-zv)\partial_v+c\big(\mathbf x,\mathbf Q(\mathbf x)\big)\partial_w
\end{array}
\end{equation}
for the potential $q=r^{-4}F_{10}\displaystyle \Big(\frac{x}{r^2},\displaystyle \frac{y}{r^2}\Big)$ and $x^2+y^2\neq 0$.

Independent invariants can be easily calculated, we obtain
\begin{equation}
s=\frac{x}{r^2}\qquad \text{ and } \qquad t=\frac{y}{r^2}.
\end{equation}

We have now to solve the following system
\begin{equation}
\label{syscon}
dx=\frac{xzdu}{xw-zu}=\frac{xzdv}{yw-zv}=\frac{xzdw}{\frac{1}{2}-xu-yv-zw},
\end{equation}
where an immediate solution of $v$ is given by $v=\frac{t}{s}u$. The system \eqref{syscon} then becomes

\begin{equation}
dr=\frac{du}{\frac{2sw}{\sqrt{1-r^2(s^2+t^2)}}-\frac{2u}{r}}=\frac{dw}{\frac{1}{r^2\sqrt{1-r^2(s^2+t^2)}}-\frac{2(s^2+\frac{t^2}{s})u}{\sqrt{1-r^2(s^2+t^2)}}-\frac{2w}{r}}.
\end{equation}
We find the following solutions
\begin{equation}
\begin{array}{rcl}
u&=&-\frac{1-2r^2(s^2+t^2)}{r^2}U+\frac{\sqrt{1-r^2(s^2+t^2)}}{r}iV+s \\*[2ex]
w&=&\frac{2(s^2+t^2)\sqrt{1-r^2(s^2+t^2)}}{sr}U+\frac{1-2r^2(s^2+t^2)}{2r^2s}iV+\frac{\sqrt{1-r^2(s^2+t^2)}}{r}.
\end{array}
\end{equation}
The dependent invariants are therefore given by

\begin{equation}
\begin{array}{rcl}
U&=&-r^2\big(1-2r^2(s^2+t^2)\big)u+2r^3s\sqrt{1-r^2(s^2+t^2)}w-r^2s\\*[2ex]
&=&(x^2+y^2-z^2)u+(2xz)w-x, \\*[2ex]
V&=&i\left[-4r^3(s^2+t^2)\sqrt{1-r^2(s^2+t^2)}u-2r^2s\big(1-2r^2(s^2+t^2)\big)w+2rs\sqrt{1-r^2(s^2+t^2)}\right]\\*[2ex]
&=& i\left[\frac{-4z(x^2+y^2)}{r^2}u+\frac{2x}{r^2}\left(x^2+y^2-z^2\right)w+\frac{2xz}{r^2}\right].
\end{array}
\end{equation}

Calculating now the terms $u_x,\ldots , w_z$ and substituing into the system of equations  $\eqref{E1},\ldots,\eqref{E4}$, we find

\begin{flalign*} 
E_1 \equiv& -\frac{1}{4s}V^2+\left(\frac{s^2+t^2}{s}\right)U^2+U+tU_t+sU_s=sF_{10}(s,t), &&\\*[2ex]
E_2 \equiv& -\left[2r^2t\right]iV + \left[4rt\sqrt{1-r^2(s^2+t^2)}\right]U \\
&+\left[1-2r^2s^2\right]iV_t+\left[2r^2st\right]iV_s\\
&+\left[4rs^2\sqrt{1-r^2(s^2+t^2)}\right]U_t -\left[4rst\sqrt{1-r^2(s^2+t^2)}\right]U_s=0,\\*[2ex]
E_3 \equiv& \left[1-2r^2t^2\right]iV + \left[4rt^2\sqrt{1-r^2(s^2+t^2)}\right]U \\
&-\left[2r^2s^2t\right]iV_t+\left[s(2r^2t^2-1)\right]iV_s\\
&+\left[4rs^2t\sqrt{1-r^2(s^2+t^2)}\right]U_t -\left[4rst^2\sqrt{1-r^2(s^2+t^2)}\right]U_s=0,\\*[2ex]
E_4 \equiv& -\left[rt\sqrt{1-r^2(s^2+t^2)}\right]iV + \left[t\big(1-2r^2(s^2+t^2)\big)\right]U \\
&-\left[rs^2\sqrt{1-r^2(s^2+t^2)}\right]iV_t+\left[rst\sqrt{1-r^2(s^2+t^2)}\right]iV_s\\
&+\left[s^2\big(1-2r^2(s^2+t^2)\big)\right]U_t -\left[st\big(1-2r^2(s^2+t^2)\big)\right]U_s=0.
\end{flalign*}
Considering the new equations $E_2'=tE_2-E_3$, $E_3'=tE_2+E_3$, as well as  $E_3''=\sqrt{1-r^2(s^2+t^2)}E_3'-4rtE_4$, $E_4'=\sqrt{1-r^2(s^2+t^2)}E_3'+4rtE_4$, we obtain $V=sV_s+tV_t$. We find  $V_t=0$ and the simplified system of equations $\eqref{E1},\ldots,\eqref{E4}$ becomes

\begin{equation} 
\frac{s^2+t^2}{s^2}U^2+\frac{1}{s}U+\frac{t}{s}U_t+U_s=F_{10}-\left(\frac{C_1}{2}\right)^2, \qquad \frac{t}{s}U+sU_t -tU_s=0.
\end{equation}
with
\begin{equation} 
V(s)=iC_1s,\qquad C_1\text{ a real constant.}
\end{equation}

To go further in this symmetry reduction let us assume that $F_{10}=(\frac{C_1}{2})^2$, i.e. the potential is of the form $q=\left(\displaystyle \frac{C_1}{2r^2}\right)^2$. Then we obtain
\begin{equation}
U(s,t)=\frac{2s}{(s^2+t^2)\ln{\big(C_2(s^2+t^2)\big)}},
\end{equation}
where $C_2$ is a real constant, such that
\begin{equation}
\begin{array}{rcl}
u&=&-\displaystyle \frac{C_1xz}{r^4}+\displaystyle \frac{2x(x^2+y^2-z^2)}{r^2(x^2+y^2)\ln{\left(C_2\left(\displaystyle \frac{x^2+y^2}{r^4}\right)\right)}}+\displaystyle \frac{x}{r^2},\\*[2ex]
v&=&-\displaystyle \frac{C_1yz}{r^4}+\displaystyle \frac{2y(x^2+y^2-z^2)}{r^2(x^2+y^2)\ln{\left(C_2\left(\displaystyle \frac{x^2+y^2}{r^4}\right)\right)}}+\displaystyle \frac{y}{r^2},\\*[2ex]
w&=&\displaystyle \frac{C_1(x^2+y^2-z^2)}{2r^4}+\displaystyle \frac{4z}{r^2\ln{\left(C_2\left(\displaystyle \frac{x^2+y^2}{r^4}\right)\right)}}+\displaystyle \frac{z}{r^2}.
\end{array}
\end{equation}
Therefore, $\mathbf Q=ue_1+ve_2+we_3$ is a quaternionic solution of the three-dimensional Riccati equation with potential $q(x,y,z)=\left(\displaystyle \frac{C_1}{2r^2}\right)^2$.

Again here the solution $\mathbf Q$ can be used to obtain a solution $\psi$ of the Schr\"odinger equation $\eqref{schrodinger3}$ using theorem  \ref{expA}. We find
\begin{equation}
\begin{array}{rcl}
\psi(x,y,z)&=&\exp{\left(-\mathcal A[\mathbf{Q}]\right)}\\*[2ex]
&=&\exp{\left(- \left(\displaystyle \int_{1}^x{u(\xi,0,0)d\xi}+\displaystyle \int_{0}^y{v(x,\eta,0)d\eta}+\displaystyle \int_{0}^z{w(x,y,\zeta)d\zeta}-\ln{C}\right)\right)}\\*[2ex]
&=& \displaystyle \frac{C\ln{\left(C_2\frac{x^2+y^2}{\left(x^2+y^2+z^2\right)^2}\right)}}{\ln{C_2}\sqrt{x^2+y^2+z^2}\exp{\left(\frac{C_1z}{2\left(x^2+y^2+z^2\right)}\right)}}
\end{array}
\end{equation}
i.e. $\psi\in \ker \big(-\Delta+q(x,y,z)\big)$ for $q(x,y,z)=\left(\displaystyle \frac{C_1}{2(x^2+y^2+z^2)}\right)^2$.

\subsubsection*{Acknowledgement}

CP and ST would like to thank Benoit Huard from Northumbria University for his help in the use of softwares that were used to calculate Lie symmetries. CP acknowledges a scholarship from the Institut des Sciences Math\'ematiques.

\newpage

\end{document}